\newtheorem{theorem}{Theorem}
\newtheorem*{theorem*}{Theorem}
\DeclareMathOperator{\Var}{Var}
\DeclareMathOperator{\argmax}{argmax}
\let\oldabstract\abstract
\let\oldendabstract\endabstract
\renewenvironment{abstract}
{%
               {\list{}{\addtolength{\leftmargin}{-2em} 
                        \listparindent 1.5em%
                        \itemindent    \listparindent%
                        \rightmargin   \leftmargin%
                        \parsep        \z@ \@plus\p@}%
                \item\relax}%
               {\endlist}%
\oldabstract}
{\oldendabstract}
\title{Revisiting the effect of search frictions on market concentration}
\author{Jules Depersin and Bérengère Patault\thanks{Jules Depersin: University of Amsterdam, j.d.p.depersin@uva.nl ; Berengere Patault: University of Amsterdam,  b.patault@uva.nl. We thank Rémi Avignon, Simon Bunel, Pieter Gautier, Etienne Guigue, Albert Jan Hummel, Isabelle Mejean, Horng Chern Wong for their remarks.}}
\date{\today\\ Working paper}
\begin{document}

\maketitle
\begin{abstract}
Search frictions can impede the formation of optimal matches between consumer and supplier, or employee and employer, and lead to inefficiencies. This paper revisits the effect of search frictions on the firm size distribution when challenging two common but strong assumptions: that all agents share the same ranking of firms, and that agents meet all firms, whether small or large, at the same rate. We build a random search model in which we relax those two assumptions and show that the \textit{intensity} of search frictions has a non-monotonic effect on market concentration.  
An increase in friction intensity increases market concentration up to a certain threshold of frictions, that depends on the slope of the meeting rate with respect to firm’s size. We leverage unique French customs data to estimate this slope. First, we find that in a range of plausible scenarios, search frictions  intensity increases market concentration.
Second,  we show that slopes have increased over time, which unambiguously increases market concentration in our model. Overall, we shed light on the importance of the \textit{structure} of frictions, rather than their \textit{intensity}, to understand market concentration.




\vspace{1.5cm}
\noindent \textbf{JEL Classification}: F14, F16, M3, L2, L25. \\

\end{abstract}

\clearpage
\section{Introduction}





Market concentration has increased in the last decades in the US \citep{akcigit2021ten,autor2020fall,decker2016declining}. Yet, economists are still far from reaching a consensus on the causes of this increase. Two groups of explanations prevail: differentiating them is crucial as they lead to vastly different policy implications. 
On the one hand, the rise in  concentration might be due to a rise in barriers to entry: in this case, market concentration is bad news, as it increases market power, and diminishes innovation and efficiency \citep{de2021quantifying,gutierrez2019failure}. On the other hand, the rise in concentration might be due to an increase in market competition, which enables the most-efficient firms to grow faster: in this case, market concentration is good news \citep{autor2020fall}. For instance, the recent decrease in search frictions, with the advent of digital platforms among others, has been advanced as one explanation for increased market competition and thus increased market concentration \citep{albrecht2022vertical,bai2020search}.
The common view is that by preventing  agents from being matched with the most-efficient firm immediately, search frictions allow  the least-efficient firms to survive and thus decrease both market concentration and market competition \citep{eaton2019firm,lenoir2022search}. 
In this paper, we revisit this theoretical link between search frictions and market concentration. Our main result is twofold. First, we show that the \textit{intensity} of search frictions has a non-monotonic effect on market concentration. In most realistic settings, market concentration increases with the \textit{intensity} of search frictions and thus increases while market competition decreases. 
Second, we shed light on the importance of changing \textit{structures} of search frictions over time rather than their \textit{intensity} as drivers of concentration.  \\






In a world of homogeneous preferences, all the agents - which could be either consumers or workers - should ideally be matched with the firm that is unanimously preferred - often referred to as the most efficient. All other firms should optimally have a null size: frictions can only decrease market concentration. 
In this scenario, an increase in market concentration is always efficient, as it indicates that a larger fraction of agents have found the most efficient firms and are matched with them. 
Yet, there is a substantial evidence that agents have heterogeneous preferences: workers with different skills rank jobs differently and climb different job ladders \citep{lindenlaub2016multidimensional,moscarini2001excess,papageorgiou2014learning}, consumers have different valuations of product characteristics (see the IO literature on demand estimation, \cite{berry1995automobile}).\footnote{ In this paper, we are agnostic about the reasons for heterogeneous agents' preferences. These preferences can arise from a variety of sources. One of them is that agents may have multi-dimensional traits: for instance workers may have both a cognitive ability and a manual ability ; and employers may have different needs for cognitive versus manual skills. With such multi-dimensionality, different workers rank employers differently \citep{lindenlaub2016multidimensional}. Another source of heterogeneous preferences may be non-homothetic preferences: poor and wealthy consumers may have different utility weights on the product's quality and price. As a result, consumers do not necessarily all minimize the quality-adjusted price of products.  
} 
In the presence of heterogeneous preferences, the optimal assignment changes, and each agent wants to be matched with their own preferred firm. The efficient equilibrium has each firm matched with the masses of agents that prefer them. In this case, frictions still induce deviations between the optimal and the realized size distribution. Yet, we do know whether the realized distribution is more or less concentrated than the optimal one. Do search frictions smooth the distribution of matches or can they concentrate it further? How does a decrease in search friction affect the concentration of the market ?  \\


We build the first theoretical model that studies the effect of frictions on market concentration, allowing for heterogeneous preferences. We find that the effect of search frictions on market concentration is not monotonic in friction intensity. An increase in friction increases market concentration up to a certain threshold of frictions, that depends on the slope of the meeting rate with respect to firm's size. Note that our theory encompasses the  extreme case in which the meeting rate is uniform across firms - the most common assumption in the literature: in that case the threshold is null so that frictions always decrease market concentration. The slope at which the meeting rate increases with the firm's size is therefore crucial to characterize the effect of frictions for market concentration. We call this slope the ``structure'' of the search frictions, which has been largely overlooked by the literature.  
 We use data from the international goods market to estimate this structure empirically.  We find that in a range of plausible scenarios, search frictions increase market concentration, which casts doubt about the link between rising market concentration and decreasing friction intensity. However, using our model, we find that changes in the structure itself throughout the years, which we observe empirically, can cause market concentration while decreasing total welfare, making it a ``bad concentration''.    \\ 

To build intuition, we start with a toy model: two firms $A$ and $B$ match with many agents.  While firms have no capacity constraint and are always happy to match with any agent that they encounter, agents can be matched with at most one firm, but keep searching after their first match as they can switch from one firm to another. We assume there are two types of agents: some have an intrinsic preference for firm $A$, while others prefer firm $B$.\footnote{We are agnostic about why different agents may prefer either firm $A$ or $B$. The differences between the two companies could be due to a range of factors, such as productivity, facilities, or quality, which we do not specify. 
} We first show that when the meeting rate is the same for both firms, higher frictions advantage the least preferred firm of the market: this firm's market share is higher in equilibrium than the share of agents who prefer this firm. Then, we show what happens if a firm's meeting rate depends on the firm's market share. We show empirically that this hypothesis can be more realistic in a number of settings. In this case, an increase in friction intensity has a non-monotonic effect, which is a unique feature of our model: it first advantages the most-preferred firm of the market, then disadvantages it. The level of frictions at which frictions start to disadvantage the most-preferred firm is an increasing function of the slope at which meeting rates increase with firms size.
While frictions have been described in the literature as the force keeping small firms alive, this shows that they can on the contrary lead to the disappearance of small firms which would otherwise survive.  We therefore claim that the
way meeting rates depend on the size of firms matters and can change our representation
of how frictions distort firm sizes and of who benefits from them.  \\


We then build a more general framework with a continuum of suppliers. We develop a partial equilibrium model of the matching between atomistic agents and firms. The model is a random search model that features two-sided heterogeneity - agents and firms - and many-to-one matching: one agent can be matched with one firm at most, and the equilibrium size of a firm is determined by the number of agents to whom it is matched. In contrast with most of the literature, we assume that the agents have intrinsic preferences for certain firms, which we formalize building on \cite{lindenlaub2016multidimensional}. The optimal firm size distribution is the distribution of agents' preferences.  We show that in this more general setup, basic features of our toy-model are still present. Frictions tend to have an homogenizing (or ``deconcentrating'') effect on the firm size distribution when compared with the optimal one \emph{when the meeting rate is the same for all firms}. We show that in other cases, where the meeting rate is allowed to depend on the market share of the firm, frictions can lead to an increase of market concentration, which is consistent with the findings of the two firms model and which is a unique feature of our model. In such cases, an increase in friction increases market concentration up to a certain threshold of frictions, that depends on the slope of the meeting rate with respect to firm’s size. It is worth noting that our model is applicable to any many-to-one matching framework, and is thus readily applicable to labor markets or goods markets. \\

We model the meeting rate of a firm as a affine function of the firm's size, where the slope with respect to the firm's size is equal to a parameter $\alpha$.\footnote{One potential interpretation is that this parameter corresponds to the share of agents that meet firms proportionally to their size, while the other agents meet firms at a rate independent of their sizes. Agents may indeed use different methods to meet firms: some might be using digital platforms when other are not for instance.}  We then show that we can compute with our theoretical model an optimal $\alpha$ that maximizes the matching efficiency, defined as the sum of surpluses of all matches, for a given friction intensity and for a given distribution of agents' preferences. This  optimal $\alpha$ differs from the private optimal $\alpha$ that an individual agent would choose if given the choice, as the agent does not internalize the effect of her own match on the firm size distribution and thus on the meeting rates of other agents. This paper thus introduces a new phenomenon of concentration externality. \\


The conclusion of our theoretical study is that depending on the shape of search frictions, higher frictions may either advantage the least or the most preferred suppliers of the market. To gain further insights about what happens in the real world, we study the shape of search frictions using French data.
We use a detailed firm-to-firm export dataset, provided by French Customs, covering the universe of French firms and their exports to European Union destinations.
We confirm, regressing the flow of previously unmatched buyers coming into a firm at time $t$ on the size of this firm at time $t-1$ that the meeting rate is well approximated by a linear function of the market share of suppliers. We estimate that $\hat{\alpha} = 0.75$, which indicates that the meeting rate increases close to one-to-one with the supplier's market share. 
We provide numerical evidence that with different scenarii of agents' preferences and with such values of $\alpha$, frictions mainly have a concentrating effect: search frictions tend to advantage the most preferred firms of the market. \\  

We study the evolution of $\alpha$ over the years and find that it has increased significantly over the past 20 years: it rose from 0.63 in 1996 to 0.84 in 2016. This finding has important policy implications. Market concentration has increased in the last decades, which is seen as
worrisome by both policy-makers and economists \citep{akcigit2021ten,autor2020fall,decker2016declining} A potential explanation is a decrease in search frictions, with the advent of digital platforms among others \citep{albrecht2022vertical,bai2020search,lenoir2022search,mazet2021information}.
Our findings challenge and nuance this theory: 
we suggest that part of the rise in market concentration can be due to a change in the \emph{structure} of search frictions rather than a change in their \emph{intensity}. This new explanation has dramatically different policy implications than the standard view: while a decline in frictions magnitude is welfare-enhancing, an increase in the slope $\alpha$ is not. In the end, we show that the advent of digital platforms may lead to what we call a ``bad concentration'' that decreases efficiency. \\

This paper is organized as follows. Section \ref{litterature} reviews the literature.  Section~\ref{sec:twosup} presents a toy model with two firms, which enables the reader to gain intuition about the equilibrium effect of search frictions. Section~\ref{sec:multifirms} shows that our results on the effects of frictions generalize to an environment with a continuum of firms.  Section~\ref{sec:heterogmeetingrates} provides empirical evidence that the meeting rate of a firm does increase with this firm's market share. Section~\ref{sec:conclusion} concludes.

\section{Related literature} \label{litterature}

Our paper, through analyzing the role of search frictions for the firm size distribution - contributes to three large and distinct literatures: the Labor literature, the Trade literature and the Industrial Organization literature.  \par 
A vast, but relatively recent, body of research describes the presence of information frictions in international goods markets \citep{allen2014information,bernard2022origins,chaney2014network,dasgupta2018inattentive,eaton2019firm,fontaine2022frictions,krolikowski2021goods,lenoir2022search,ornelas2021preferential,startz2016value,steinwender2018real}. Those papers explore the magnitude of such frictions, and their effect on equilibrium outcomes, such as the equilibrium number of buyers per firm \citep{lenoir2022search,eaton2019firm,fontaine2022frictions}. \cite{lenoir2022search} explain that search frictions distort the effectiveness of resource allocation across heterogeneous producers, through allowing the least-efficient firms to survive.\footnote{This increased competition due to a decline of search frictions has also been emphasized on domestic product markets \cite{albrecht2022vertical} or credit markets \cite{mazet2021information}.} Yet, this result hinges on the assumption of a common ranking of suppliers across buyers. We contribute to this literature by including, for the first time, heterogeneous buyers' preferences for suppliers in a model of international goods markets. In this setting, search frictions may impact the effectiveness of matching through novel mechanisms. Notably, we show that, taken together with the fact that the rate at which foreign buyers meet suppliers increases with the supplier's market share, higher frictions may lead to a higher market concentration, which is a unique feature of our model. \\\par 
The Labor literature was the first to model search frictions. A vast number of papers has analyzed the matching of workers to firms, to explain the dispersion of wages as well as to study the assortativity of matching between workers and firms \citep{postel2002equilibrium}. 
Formally, our paper builds on a recent paper \citep{lindenlaub2016multidimensional}, which analyzes sorting in a framework where workers and firms have multi-dimensional characteristics. This multi-dimensionality generates sorting between workers and firms that cannot be obtained in simpler settings.\footnote{More precisely, to generate sorting, one needs either multi-dimensionality of agents' characteristics, capacity constraints on the firm side, or type-dependent peaks of the surplus function.} 
We re-interpret their model as different agents of type $x$ having heterogeneous preferences for firms of type $y$. 
We contribute to this literature by showing that, with meeting rates increasing in firms' sizes on top of heterogeneous preferences,  the effect of search frictions on firm size and on the efficiency of the matching process is significantly altered. \\

Our paper also closely relates to the existing literature on information frictions in online marketplaces \citep{bai2020search,hui2022designing,li2020buying}. \cite{bai2020search} show that digital platforms have a limited ability to help small high-quality sellers overcome the demand frictions, because of the too large market congestion. They emphasize in particular the demand reinforcement effect, through which past sales cause the arrival of future sales. This demand reinforcement effect is equivalent to our finding that the rate at which an agent can meet a firm increases with the firm's market share. We contribute to this literature by showing that this demand reinforcement effect has consequences on the effect of search frictions on the firm size distribution. In particular, an increase in this effect could explain part of the increase in market concentration in the recent decades. \\
\par
Last, our paper relates to the macroeconomic literature on the misallocation of resources in the economy, and its consequences for aggregate output and growth \citep{david2016information,hsieh2009misallocation,restuccia2017causes}. \\

\section{Two-firms model} \label{sec:twosup}

To gain insight into the effect of search frictions on the allocation of agents to firms, we present a toy model with two firms and heterogeneous agents. We deviate from the common assumption that all agents prefer the most productive firm of the market, and introduce heterogeneous agents' preferences. We study the effect of frictions on the allocation of agents to firms in this setup. In a world with homogeneous agents' preferences, search frictions always advantage the least preferred firms. We show that with heterogeneous agents' preferences this needs not be true. In particular we show that how one models search frictions is not innocuous: in some cases frictions advantage the least preferred firms in the market, and in other the most preferred firms.


\subsection{Main assumptions}

We consider two firms $A$ and $B$. Agents enter the market at rate $\mu$. When entering the market, the agents are unmatched and look for firms. They then meet firm $A$ at a rate $\lambda_A$, and firm $B$ at a rate $\lambda_B$.  The agents match initially with the first firm they meet.\footnote{Making this assumption is equivalent to assuming that all matches lead to a positive surplus. \\} After this first match, agents keep on searching for new firms, and meet the other firm, whether $A$ or $B$, at respective rate $\lambda_{A}$ or $\lambda_{B}$.
\footnote{This hypothesis is made in this section to keep the toy model as simple as possible. It is however unrealistic that unmatched and matched agents meet firms at the same rate. All results presented continue to hold when allowing meeting rate for matched agents to be the one for unmatched ones divided by a constant. Our multi-firm model in the next section uses this more realistic hypothesis. } When they meet the other firm, they can switch firms or remain with their current match, depending on their \emph{personal preferences}. We assume that firms have no capacity constraint and that all matches yield a positive surplus. Therefore firms are always happy to match with any agent that they encounter. There are two types of agents: type $a$ agents have an intrinsic preference for firm $A$, while type $b$ agents prefer firm $B$. We denote $p_a$ the proportion of agents which prefer firm $A$ while a proportion $p_b=1-p_a$ prefers firm $B$. Note that in most existing models of agent-firm matching with frictions, one assumes that all agents prefer the same firm, the one with the highest productivity, and therefore $p_a \in \{0,1\}$. \\ 

Agents exit the market at rate $\mu$, independently of whether they are matched and whom they are matched with. We denote $u(i)$ the share of unmatched agent of type $i$, $u$ the total share of unmatched agents, $u=u(a)+u(b)$ and $m=1-u$ the share of matched agents.  We define the total meeting rate $\lambda_{tot} = \lambda_A + \lambda_B$. \\

We denote $h(i,A)$ and $h(i,B)$ the number of type $i$ agents matched with respectively firm $A$ and with firm $B$. The number of agents matched with firm $A$ writes: $h(A)=h(a,A)+h(b,A)$. \\

At steady state the inflows and outflows into and out of each state compensate each other. 
For instance, the number of type $i$ agents entering the unmatched state should equate the number of type $i$ agents leaving the unmatched state. In each period, a proportion $\mu \times p_i \times dt$ of type $i$ agents enter the market and become unmatched: these are the inflows into the unmatched state. The outflows are composed of: a proportion $(\lambda_A+\lambda_B)$ of the unmatched agents of type $i$ which become matched, and a proportion $\mu$ of the unmatched agents of type $i$ which exit the market. At equilibrium, inflows into the unmatched state equal outflows, therefore:   $$(\lambda_A+\lambda_B+\mu) u(i) = \mu p_i.$$

From this equation we can compute $u(i)$ and recover $m=\lambda_{tot}/(\mu+\lambda_{tot})$.

The inflows of type $a$ agents into firm $A$ are equal to the unmatched agents of type $a$ - $u(a)$ - which meet $A$, which happens at rate $\lambda_A$, and the type $a$ agents matched initially with $B$, i.e. $h(a,B)$, which meet A. The outflows are type $a$ agents matched with $A$ which exit the market. Therefore at equilibrium:
\begin{align*}
\mu h(a,A) = h(a,B) \lambda_A + u(a) \lambda_A.
\end{align*}

The inflows of type $a$ agents into firm $B$ are the unmatched agents of type $a$ which meet $B$. The outflows are the type $a$ agents initially in $B$ which either exit the market, which happens at rate $\mu$, or meet $A$, which happens at rate $\lambda_A$:
\begin{align*}
    \mu h(a,B) + \lambda_A h(a,B) = \lambda_B u(a)
\end{align*}

Writing those equations with type $b$ agents and re-arranging gives the equilibrium number of agents matched with firm $A$ and firm $B$:
\begin{align*}
    h(A) &=  \frac{\lambda_{tot}}{\mu + \lambda_{tot}} \left[ 
    \left( \frac{\lambda_A}{\mu+\lambda_B} \times \frac{\mu}{\lambda_{tot}} \right) 
    + p_a 
    \left( \frac{\lambda_A \lambda_B (2\mu+\lambda_{tot})}{\lambda_{tot}  (\mu+\lambda_A) (\mu+\lambda_B)} \right) 
    \right] \\ 
    h(B) &=  \frac{\lambda_{tot}}{\mu + \lambda_{tot}} \left[ 
    \left( \frac{\lambda_B}{\mu+\lambda_A} \times \frac{\mu}{\lambda_{tot}} \right) 
    + p_b 
    \left( \frac{\lambda_A \lambda_B (2\mu+\lambda_{tot})}{\lambda_{tot}  (\mu+\lambda_A) (\mu+\lambda_B)} \right) 
    \right] .
\end{align*}
 We can also write those results in term of the market share $s_A$ and $s_B$ of both firms:
 \begin{equation} \label{equationsa}
     s_A= \frac{h(A)}{m}= \left( \frac{\lambda_A}{\mu+\lambda_B} \times \frac{\mu}{\lambda_{tot}} \right) 
    + p_a 
    \left( \frac{\lambda_A \lambda_B (2\mu+\lambda_{tot})}{\lambda_{tot}  (\mu+\lambda_A) (\mu+\lambda_B)} \right).
 \end{equation}

The market share of each firm will be the main quantity investigated in this study, we will sometimes call it the ``size'' of the firm.  \\

So far we have been agnostic on the meeting rate $\lambda_A$ and $\lambda_B$. If these meeting rates do not depend on the market shares $s_A$ and $s_B$, then Equation \eqref{equationsa} gives: 

$$ \frac{\partial s_A}{\partial p_a} = \frac{\lambda_A \lambda_B (2\mu+\lambda_{tot})}{\lambda_{tot}  (\mu+\lambda_A) (\mu+\lambda_B)} \leq 1, $$

the equality being only attained when $\mu=0$. Therefore a change in the agents preference $p_a$ will only partially affect the market share $s_A$, and the frictions tone down changes in preferences. However, the meeting rates may increase in each firm's market share and be functions $\lambda_A(s_A)$  of $s_A$.  \cite{bai2020search} shows, using an experiment on an online platform, that the the meeting rate does increase with the firm's past sales, and we provide evidence and a ground for this phenomenon in Section \ref{sec:heterogmeetingrates}. We encapsulate this idea by assuming from now on that the meeting rate might be an increasing function of the firm's market share. In this case, $ \frac{\partial s_A}{\partial p_a} $ becomes a much more complicated function, because an increase in $p_a$ affects $s_A$ both directly (conditionally on meeting both suppliers, more people prefer to match with $A$) and indirectly through affecting the meeting rates. This indirect effect can make $ \frac{\partial s_A}{\partial p_a} $ greater than $1$ (depending on the function $\lambda_A(s_A)$) so that frictions accentuate changes in $p_a$. \\



\vspace{0.3cm}

In the sub-sections thereafter we make different assumptions about the meeting rates $\lambda_A$ and $\lambda_B$. We first study, in sub-section~\ref{subsec:constantmeetingrate}, the case where the meeting rates are the same for both firms and do not depend on the firms market shares.  
We then study the case in sub-section~\ref{subsec:propmarketshare} where the meeting rate is proportional to the firm's market share. 
Finally,  we study in sub-section~\ref{subsec:propmarketshare} the case in which the meeting rate is an affine function of the firm's market share. 
In all three cases search frictions distort the allocation of agents to firms, as compared to the agents preferences. Yet, we show that in some cases, an increase in search frictions advantages the least preferred firm of the market, and in other cases the most preferred firm.


\subsection{Constant meeting rate}
\label{subsec:constantmeetingrate}

We assume in this sub-section that agents meet firms at a constant rate, i.e. a rate that does not depend on the firm. As a result, the agent meets $A$ and $B$ at rate $\lambda_A=\lambda_B=\lambda$, so that $\lambda_{tot} = 2 \lambda$.
Let us define the ratio $r_f = \mu / \lambda_{tot}$, which measures the intensity of search frictions. When this ratio is large, agents meet firms at a very slow rate, whereas when it is low, agents meet quickly both firms. Plugging those in Equation \eqref{equationsa}, we find that the market share of firm $A$, among matched agents is:
\begin{align*}
s_A= \frac{r_f}{1+2r_f} + \frac{p_a}{1+2r_f}.
\end{align*}


The first term grows with the intensity of search frictions and is uniform across firms, while the second one decreases with the intensity of frictions and is firm-specific. The market share of firm $A$ is not equal to the share of agents who prefer $A$ in general: $s_A\neq p_a$, except in the non-frictional case $r_f=0$. We also note that, when $ p_a > 1/2$, $s_A$ is always smaller than $p_a$: \emph{when meeting rates are constant across firms, search frictions always disadvantage the most preferred firm}. In fact, this effect gets stronger along with search frictions: the higher the search frictions, the lower the market share of the most preferred firm in the market. When the intensity of friction goes to infinity, both market shares converge to 50\%: the higher the frictions, the lower the impact of agents' preferences on the equilibrium allocation.  \\ 

We illustrate the results in Figure~\ref{imgconstantrate}, where we display how the market share $s_A$ of firm $A$  depends on the intensity of market frictions $\mu/\lambda_{tot}$. Note that in existing International Trade papers, 
agents all prefer the same firm, which provides the lowest quality-adjusted price on the market. \\


 \begin{figure}[h!]
    \centering 
    \includegraphics[width=0.7\linewidth]{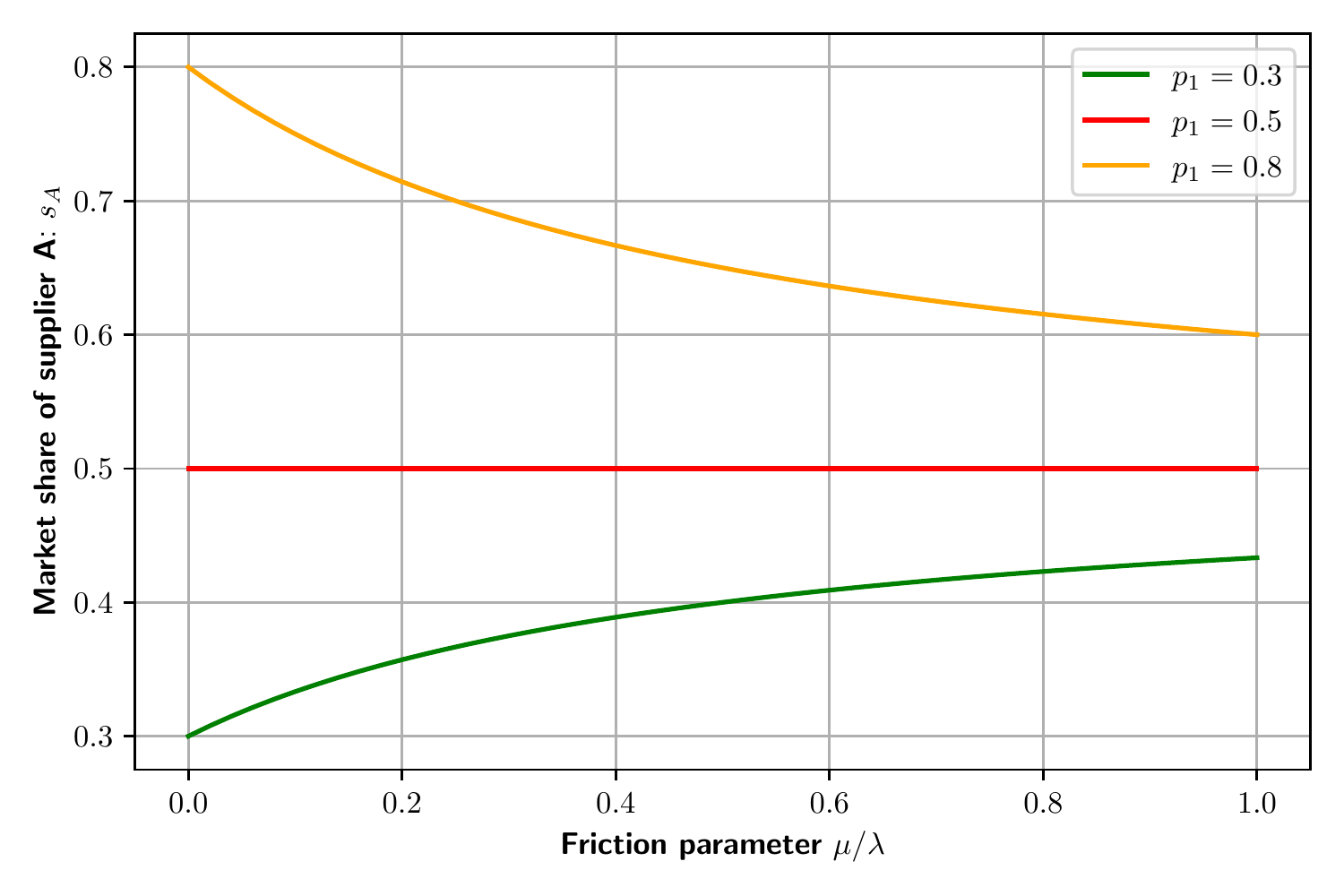}
    \caption{Constant meeting rate}
    \label{imgconstantrate}
\end{figure}


\subsection{Meeting rate increasing with market share}
\label{subsec:propmarketshare}

An alternative way to model search frictions is to model a meeting rate which increases in the market share of the firm. A rationale for this assumption is that firms with many agents are more visible to new agents. \\

We begin by a simple example where the meeting rate is proportional to the market share: $\lambda_A = \lambda s_A$ and $\lambda_B = \lambda s_B$.  This corresponds to the ``balanced matching'' assumption made in Labor economics. In order to justify such an assumption, let us consider for example the following mechanism: agents, in order to find a new firm, randomly pick an other agent in the market, search for its firm and meet with it. With this simple mechanism, the meeting rate of a given firm is directly proportional to its market share. \\

Plugging those meeting rates in Equation \eqref{equationsa}, we find that the market share of firm $A$, among matched agents is:

\begin{equation*}
    s_A =  \left\{
\begin{array}{ll} \displaystyle
      \displaystyle 0 & \text{when }  p_a \leq \frac{r_f}{ 2r_f+1} \\
      p_a (2 r_f+1) - r_f & \text{when }  \frac{r_f}{ 2r_f+1} \leq p_a \leq \frac{r_f+1}{ 2r_f+1} \\
      1 & \text{when } \frac{r_f+1}{ 2r_f+1} \leq p_a. \\
\end{array} 
\right. 
\end{equation*}

In this case, $s_A$ is piece-wise linear with respect to $p_a$. Figure~\ref{imgproprate} displays how the market share $s_A$ of firm $A$  depends on the intensity of market frictions $\mu/\lambda_{tot}$. As in the previous section, we note that $r_f=0$ leads to the frictionless equilibrium $s_A= p_a$.  However, we note important differences when compared with the previous section. \\

\textbf{Result 1: when meeting rates are proportional to the market shares, search frictions always advantage the most preferred firm.}
Indeed, when $ p_a > 1/2$, $s_A$ is always greater than $p_a$. We see that better by writing $p_a (2 r_f+1) - r_f= r_f(2p_a-1)+p_a$. This effect gets stronger along with search frictions: the higher the search frictions, the higher the market share of the most preferred firm in the market. In the previous case, frictions had an homogenizing effect, while in this case frictions accentuate existing inequalities. \\

\textbf{Result 2: when meeting rates are proportional to the market shares, a large enough friction intensity leads to a winner-takes-all scenario.}

Indeed we notice that when the intensity of friction reaches a certain level $r_f^*=p_b/(p_a-p_b)$, the firm $B$ disappears and the firm $A$ gets all the agents. \\

 Intense enough frictions can even cause one firm to disappear.  While frictions have been described in the literature as the force keeping small firms alive or advantaging them as in the previous section, we show that they can have the opposite effect and lead to the disappearance of small firms which would otherwise survive. We therefore claim that the way meeting rates depend on the size of firms matters and can change our representation of how frictions distort firm sizes and of who benefits from them. \\

\begin{figure}[h!]
    \centering 
    \includegraphics[width=0.7\linewidth]{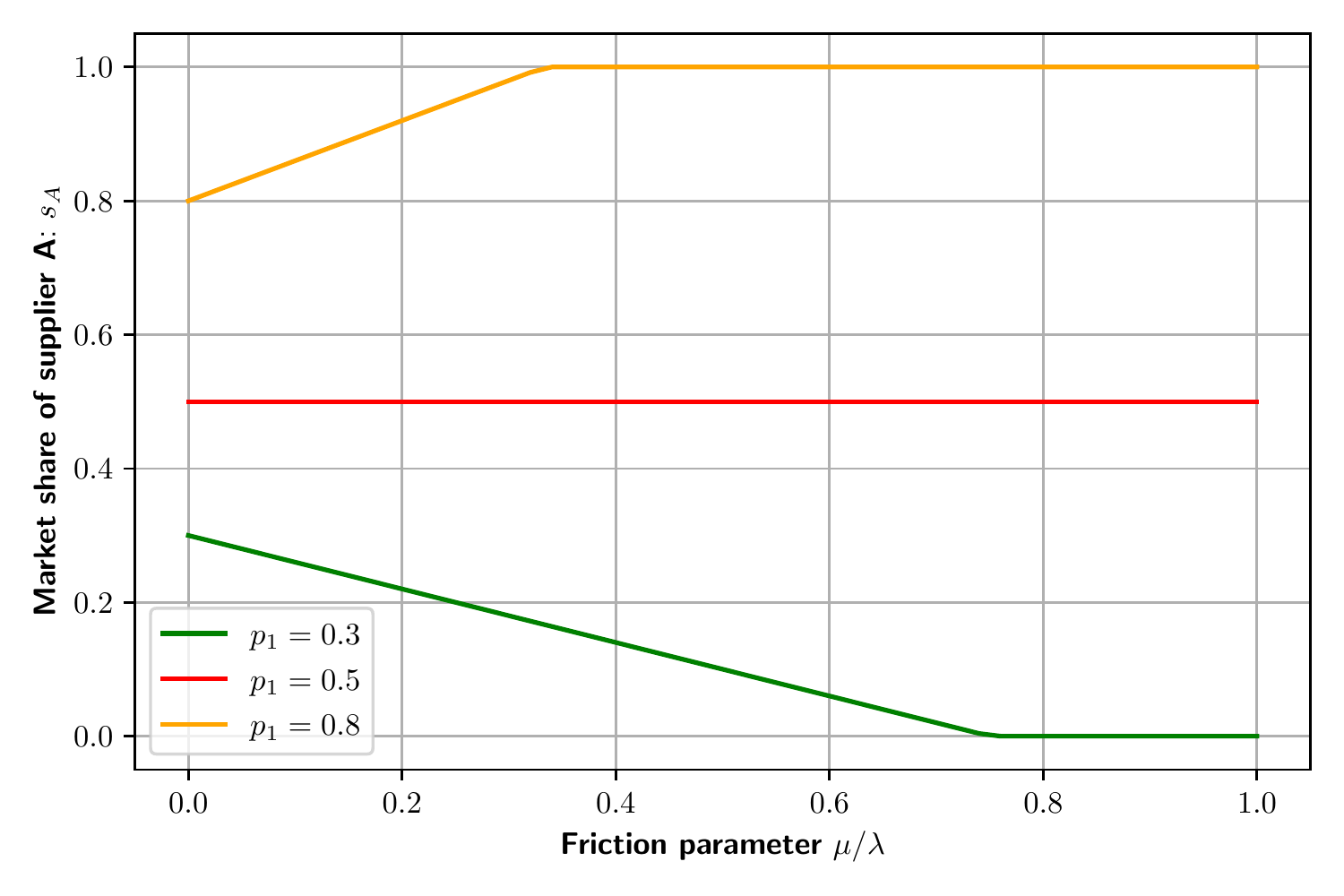}
    \caption{Meeting rate proportional to market share}
    \label{imgproprate}
\end{figure}


\subsection{Meeting rate increases in market share non-proportionally}
\label{sec:nonpropmarket_toy}

Now we allow for a more general model in which the meeting rate is not directly proportional to the market share of the firm. We assume that the meeting rate of each firm writes: $\lambda_i = \left( (1-\alpha)/2 + \alpha s_i \right) \lambda_{tot} $ for $i \in \{A,B\}$. $\alpha = 0 $ corresponds to the case, described in sub-section~\ref{subsec:constantmeetingrate}, where the meeting rate is constant and equal across firms. $\alpha = 1 $ corresponds to the case, described in sub-section~\ref{subsec:propmarketshare}, where the meeting rate is proportional to the firm's market share. \\

We have shown that in those two extreme cases, the effects of search frictions are opposed: for $\alpha=0$, higher search frictions advantage the least preferred firm, while for $\alpha=1$, higher search frictions advantage the most preferred firm. We want to show what happens in the case of $\alpha \in ]0,1[$. In this case, Equation \eqref{equationsa} has no simple closed form since it is a third-degree equation in $s_A$. However, using an analytical solver allows to show how frictions affect the market shares for different values of $\alpha$. \\

Figure~\ref{fig:saVSp1} illustrates how the market share of a firm varies with the proportion $p_a$ of agents preferring this firm for different values of $\alpha$, for a given value of frictions intensity (here $\mu/\lambda =0.7$). We first note that the market share of the most preferred firm is increasing with $\alpha$: the most preferred firm benefits from a higher $\alpha$. The second thing we note is that for some values of $\alpha$, the effect of frictions is ambiguous: it does not \emph{always} advantage the most preferred agent nor always advantage the least favorite one. For instance, for $\alpha= 0.85$, and for $\mu/\lambda =0.7$, frictions advantage firm $A$ for $0.5 \leq p_a \leq 0.9$: in this range the yellow curve is above the grey line, so $s_A >p_a$. However, we note that for  $p_a \geq 0.9$, frictions advantage the least preferred firm - firm $B$ in this example. The effect of friction is thus ambiguous in that case. Theorem \ref{alwaysloser} characterizes the cases where the effect of frictions is ambiguous. 

\begin{theorem} \label{alwaysloser}
Take $\alpha \in [0,1[$. \\ 
When $r_f \geq \frac{\alpha-1/2}{1-\alpha} $, then $s_A \leq p_a$ for all $p_a \geq 0.5 $: frictions have a homogenizing effect.\\
When $r_f < \frac{\alpha-1/2}{1-\alpha} $, then $s_A > p_a$ for at least some $p_a \geq  0.5 $: frictions may accentuate existing inequalities.
\end{theorem}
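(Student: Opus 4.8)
The plan is to exploit a feature of Equation~\eqref{equationsa} that the cubic-in-$s_A$ framing hides: once the meeting rates $\lambda_i=\big((1-\alpha)/2+\alpha s_i\big)\lambda_{tot}$ are substituted, the equation is still only \emph{linear} in $p_a$. So instead of solving the cubic for $s_A(p_a)$, I would invert it and write $p_a$ explicitly as a function of $s_A$. After normalizing $\lambda_{tot}=1$ (hence $\mu=r_f$) and using the symmetric substitution $s_A=\tfrac12+x$, the rates collapse to $\lambda_A=\tfrac12+\alpha x$, $\lambda_B=\tfrac12-\alpha x$; writing $\rho:=r_f+\tfrac12$ one obtains $p_a=N(x)/D(x)$ with low-degree polynomials $N,D$ and $D(x)=2\rho\big(\tfrac14-\alpha^2x^2\big)>0$ on the whole range $x\in[0,\tfrac12]$ (since $\alpha<1$).

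The crux is the sign of $s_A-p_a=(\tfrac12+x)-p_a(x)$, equivalently the sign of the numerator $M(x):=N(x)-(\tfrac12+x)D(x)$ of $p_a(x)-s_A$. The step I expect to do the real work is expanding $M$: by symmetry the constant term vanishes, and — the useful surprise — the $x^2$ term cancels as well, leaving the \emph{odd} cubic
\[ M(x)\;=\;x\Big[\,C_1+\alpha^2(2\rho-1)\,x^2\,\Big],\qquad C_1\;=\;\tfrac{\alpha}{4}+(1-\alpha)\rho^2-\tfrac{\rho}{2}\;=\;(1-\alpha)\,r_f\Big(r_f-\tfrac{\alpha-1/2}{1-\alpha}\Big). \]
Two things make this decisive. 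First, $2\rho-1=2r_f\ge 0$, so the bracket is nondecreasing in $x^2$ and for $x\ge 0$ its sign is governed by $C_1$, with no sign change possible; this is exactly what promotes a local statement near $p_a=\tfrac12$ to the global ``for all $p_a\ge\tfrac12$'' claim. Second, the factored form of $C_1$ shows $C_1\ge 0\iff r_f=0$ or $r_f\ge\tfrac{\alpha-1/2}{1-\alpha}$, i.e. precisely the theorem's threshold (automatically satisfied when $\alpha\le\tfrac12$, where the threshold is nonpositive).

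I would then conclude in two cases. When $r_f\ge\tfrac{\alpha-1/2}{1-\alpha}$, we have $C_1\ge 0$, so $M(x)\ge 0$ for every $x\ge 0$, giving $p_a\ge s_A$, i.e. $s_A\le p_a$, for all such $x$ — the homogenizing regime. When $r_f<\tfrac{\alpha-1/2}{1-\alpha}$, then $C_1<0$ and the linear term dominates for small $x>0$, so $M(x)<0$ there, i.e. $s_A>p_a$ for $p_a$ just above $\tfrac12$ — the accentuating regime (for which only local information is needed). The one loose end, which I regard as routine rather than the main difficulty, is matching ranges: I would check that $p_a(x)$ is strictly increasing (which also guarantees $s_A$ is a single-valued, increasing function of $p_a$) and passes through $(x,p_a)=(0,\tfrac12)$ — a short symmetry computation verifying $p_a=\tfrac12\Rightarrow s_A=\tfrac12$ — so that ``$p_a\ge\tfrac12$'' is exactly ``$x\ge 0$'', and the sign analysis of $M$ on $x\ge 0$ transfers directly. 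The genuine obstacle is purely Step two: spotting the inversion via linearity in $p_a$ and pushing the expansion through to the cancellation of the even-degree terms, without which the global direction would require a far more delicate argument.
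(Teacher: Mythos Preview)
Your proposal is correct and follows essentially the same route as the paper: both invert Equation~\eqref{equationsa} to express $p_a$ as a function of $s_A$, exploit the antisymmetry about $s_A=\tfrac12$ to reduce the sign of $p_a-s_A$ to that of an odd cubic, and identify the threshold through the linear coefficient at the center (your $C_1$ is exactly the paper's $P'(1/2)/\lambda_{tot}^3$). The only difference is presentational---you compute and factor $M(x)=x\bigl[C_1+\alpha^2(2\rho-1)x^2\bigr]$ explicitly, whereas the paper argues abstractly that an antisymmetric cubic has a derivative whose minimum sits at the center, so the sign of $P'(1/2)$ controls $P$ globally on $[\tfrac12,1]$.
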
 

We note that, for $\alpha=0$, we always have $r_f \geq \frac{\alpha-1/2}{1-\alpha}$, thus we recover the result from Section~\ref{subsec:constantmeetingrate} that frictions always have a homogenizing effect in that case, while for $\alpha \rightarrow 1$ one always have $r_f < \frac{\alpha-1/2}{1-\alpha} $, which is coherent with the results from section~\ref{subsec:propmarketshare}. One important insight of this Theorem is the following: for any $\alpha \leq 1/2$, \emph{frictions always have a homogenizing effect}. Frictions only amplify inequality when $\alpha > 1/2$, and we argue in Section \ref{sec:heterogmeetingrates} that this is often the case in practice. \\

Figure~\ref{fig:saVSfriction} illustrates how the market share of a firm varies with the intensity of frictions, for different $\alpha$. We take the case of a firm that is preferred by 10\% of agents. Once again we notice that the market share of the least preferred firm decreases with $\alpha$. For small $\alpha$ the market share of the firm increases with the frictions' intensity (like in the case described in sub-section~\ref{subsec:constantmeetingrate}), but for  large $\alpha$ the market share decreases with $\mu/\lambda$ (like in sub-section~\ref{subsec:propmarketshare}). We again see that for some values of $\alpha$ (for instance for the yellow curve representing $\alpha = 0.85$), the market share is non-monotonic in the intensity of frictions, thus the effect of frictions is ambiguous. The very direction of the effect (whether it is an advantage or disadvantage for the firm) depends on the intensity of the friction: the curve is below the line $ s=0.1$ for some values of $r_f$ and above for others. 

\newpage

\begin{figure}
    \centering
    \includegraphics[width=0.62\linewidth]{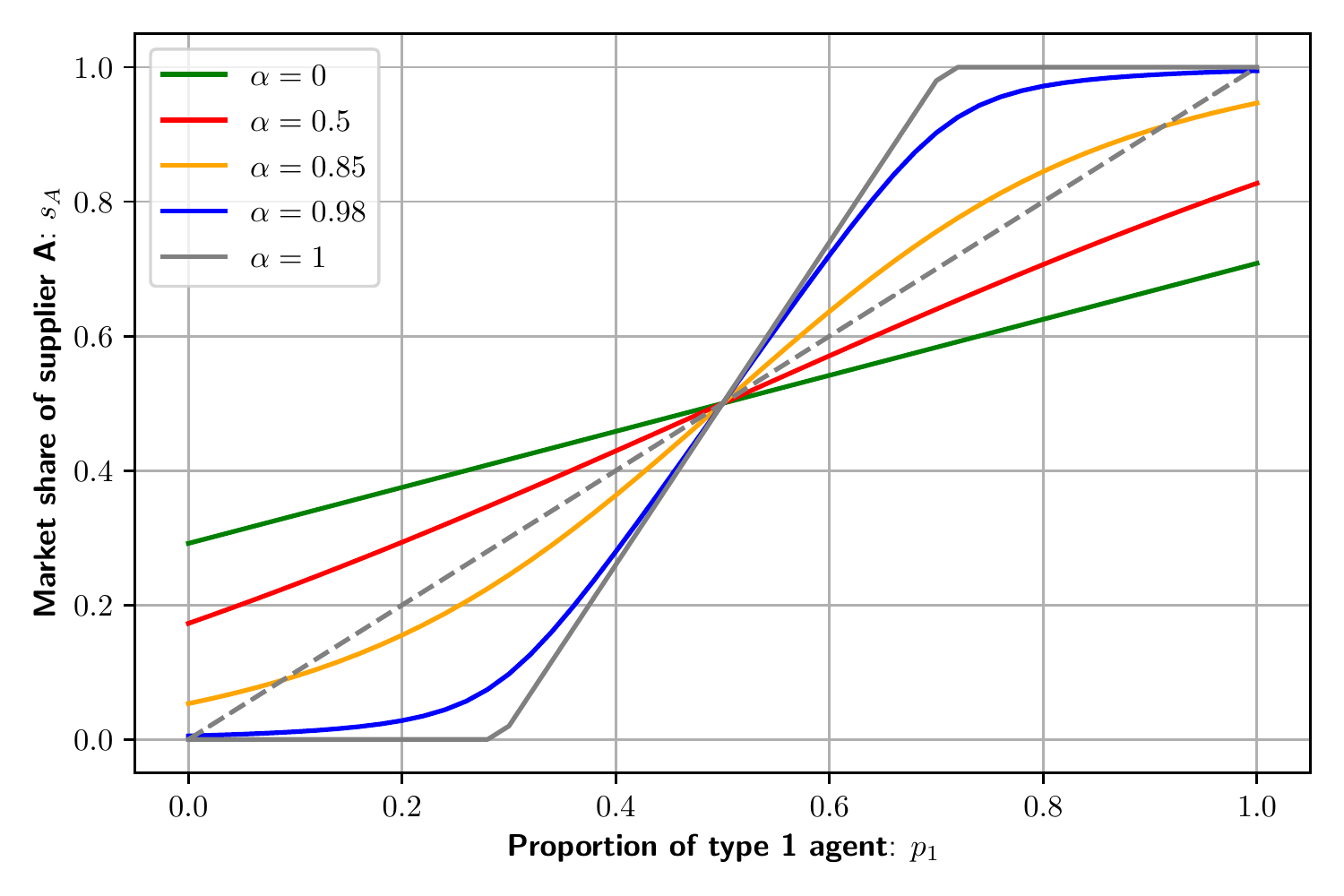}
    \caption{Market share depending on proportion of type $a$ agent, for different $\alpha$. Graph built with $\mu/\lambda=0.7$.}
    \label{fig:saVSp1}
\end{figure}
\begin{figure}
    \centering
    \includegraphics[width=0.62\linewidth]{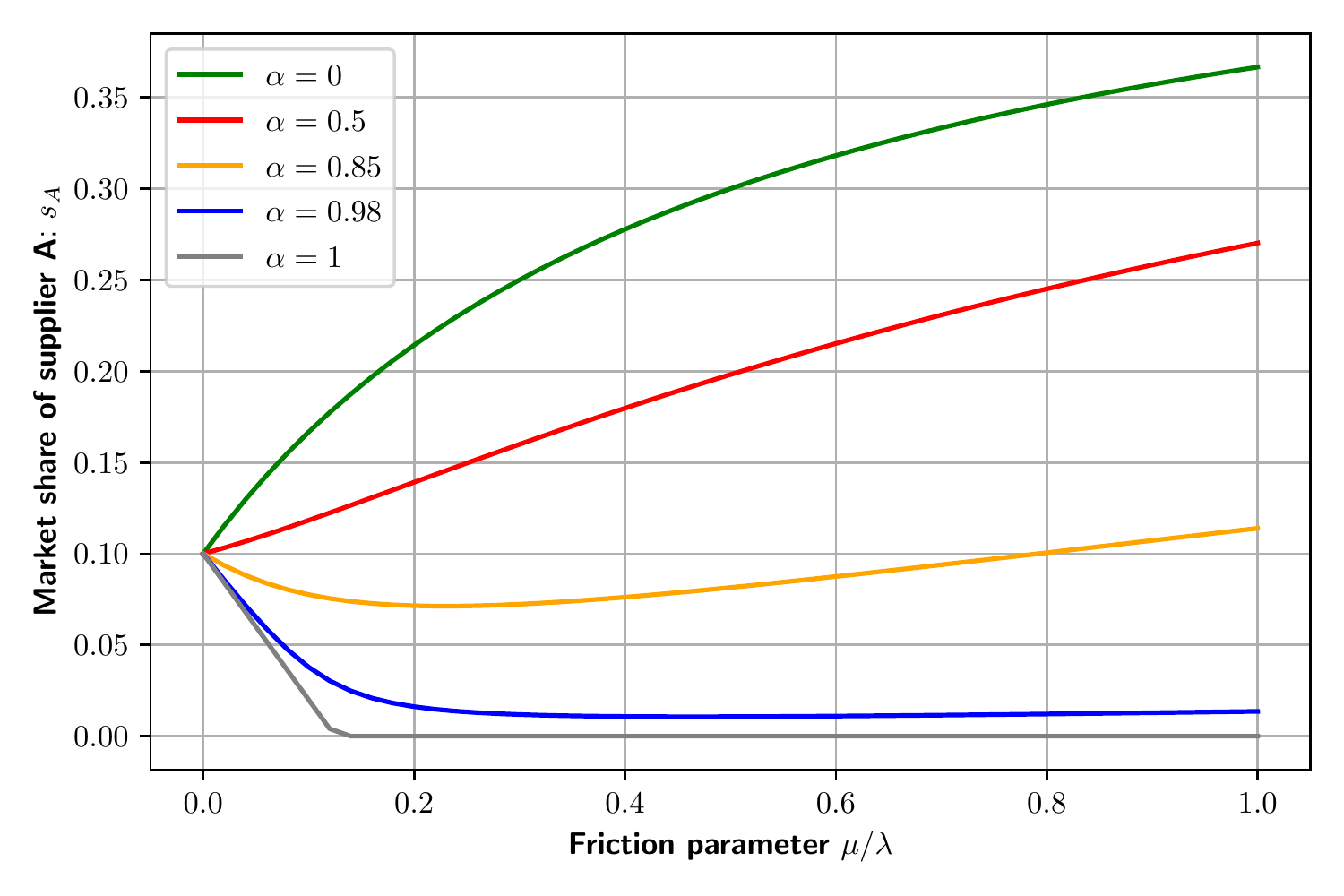}
    \caption{Market share depending on friction parameter, for different $\alpha$}
    \label{fig:saVSfriction}
\end{figure}

 \newpage


\section{Multi-firms model}
\label{sec:multifirms}
We now show that our results on the effects of frictions generalize to an environment with a continuum of firms. We characterize how the firm size distribution is affected by search frictions in presence of heterogeneous agents' preferences, and according to the shape of frictions, when there are more than two firms. We describe which intuitions of our toy model remain valid in that set-up.  We use a terminology close to \cite{lindenlaub2016multidimensional}, studying what happens when one allows meeting rates to vary across firms. \\ 

\subsection{Setting} \label{sec:multisetting}

There is a continuum of firms where each firm is indexed by its type $y$ and which is uniformly distributed on the interval $[0,1]$. We assume that firms have no capacity constraint and that all matches yield a positive surplus. Therefore firms are always happy to match with any agent that they encounter. \\ 

We also have a continuum of agents, which can be either matched to a firm or unmatched. Each agent is indexed by its type $x$. We denote $\ell(x)$ the total measure of type $x$ agents, and we impose the normalisation $\int \ell(x)dx=1$. We denote $u(x)$ the measure of unmatched  type $x$  agents.  Whether matched or unmatched, agents exit the market at Poisson rate $\mu$. If the agent is unmatched, she meets firms of type $y$ at rate $\lambda_0(y)$ and match with them. We allow for on-the-job search: if the agent is already matched with a firm $y$, she meets an offer from firm $y'$ at rate $\lambda_1(y')$, and accepts this offer only if it is better than her current offer. In what follows, we will make the hypothesis that, for all $y$, $ \lambda_0(y)= K \lambda_1(y)$ for some constant $K$.\footnote{This hypothesis allows us to more clearly show the impact of the way we model meeting rates on the distribution of firm's sizes. Moreover, having similar distributions of meeting opportunities across firms regardless of whether you are matched or unmatched is realistic. \cite{lindenlaub2016multidimensional} make a similar assumption when stating that employed and unemployed workers sample job offers from the same sampling distribution, but for them this distribution is fixed and exogenous while we will allow meeting rate to depend on the size of firm $y$. \\ }  We therefore can drop the subscript and write $\lambda_1(y)=\lambda(y)$.  \\ 

We denote $\sigma(x,y)$ the value that agents get from a type-$(x,y)$ match. A agent $x$ matched with a firm $y$ accepts an offer from $y'$ if and only if $ \sigma(x,y) < \sigma(x,y')$.  We are agnostic about the form of this function $\sigma(\cdot, \cdot)$, which is exogenous and given in our model. The fact that $\sigma$ varies across $x$ means that different types of agents have different preferences: constant $\sigma(\cdot,y)$ for all $y$ would imply homogeneous preferences.

In much of what will follow we will take the following example: we will take $x$ and $y$  in $ [0,1]$ and we will consider $\sigma(x,y)= f(d(x,y))$ where $d(x,y)=\min_{k\in \mathbb{Z}}|x-y +k|$ and $f$ is any strictly decreasing function so that agents of type $x$ have a preference for firms that have a type close to $x$. \footnote{The modulo in the distance is added to take into account the corner problem described in \cite{gautier2006labor} and to avoid an asymmetry between firms placed in the corners (near $0$ and $1$) and the other ones. $x$ and $y$ can therefore be thought of as a location on a circle of circumference $1$, so that $x=1$ is equivalent to $x=0$ and the same for $y$ (see \cite{gautier2006labor}).}\\

 Our analysis is in partial equilibrium: 
 the distribution of types is fixed once and for all and is exogenous; $\ell(x)$ is given and firms type are uniformly distributed on the interval $[0,1]$. In this model, there is neither an endogenous entry of firms nor an adaptation of firms to agents' preferences. \\

Just as in the previous section, we want to compute the expected number of agents that are matched with a given firm $y$ that we denote $h(y)$. In order to compute $h(y)$ we first study the steady-state equilibrium density of type $(x,y)$ matches, denoted by $h(x,y)$, which indicates how many agents of type $x$ match with firms of type $y$. \\

\subsection{Steady-State distribution of matches and firm sizes}

The inflow into the stock of type $(x, y)$ matches is composed of two groups: unmatched type $x$ agents who draw a type $y$ firm with Poisson rate $\lambda_0 (y)$ and accept it; and type $x$ agents already matched in any type $y'$ job who draw a type $y$ offer at rate $\lambda_1 (y)$ and accept it.

The outflows from the stock of type $(x,y)$ matches are: the agents which leave the market, which happens at rate $\mu$, and the agents which meet a `better' firm. Therefore in equilibrium:
\begin{align} \label{eq:master}
h(x,y) \times \underbrace{ \left( \mu + \int \mathbbm{1}[\sigma(x,y')>\sigma(x,y)] \lambda_1(y') dy' \right) }_{G(x,y)} = \nonumber \\
\lambda_0(y) u(x) + \lambda_1(y) \int    \mathbbm{1}[\sigma(x,y)>\sigma(x,y')] h(x,y') dy'.
\end{align}
We denote $G(x,y)$ the rate at which type $(x,y)$ matches are destroyed, either due to agent leaving the market or to another offer being accepted. \\

Writing similar inflow/outflow equations for the stock of type $x$ unmatched agents gives:
$$  u(x)(K\lambda_{tot}+\mu)= \ell(x) \mu,$$

where $K \lambda_{tot}=  K \int_0^1 \lambda_1 (y) dy= \int_0^1 \lambda_0 (y) dy$. 

After a few computations, we show that:
\begin{equation}\label{hxy}
h(x,y)= \frac{u(x) \lambda_1(y) K(\mu + \lambda_{tot})}{G(x,y)^2}   = \frac{\mu K(\mu + \lambda_{tot})}{K\lambda_{tot}+\mu}\frac{\ell(x) \lambda_1(y)}{G(x,y)^2} .
\end{equation}

Note that when taking a constant $\lambda_1$ we recover the result from \cite{lindenlaub2016multidimensional}. \\

The mass $h(y)$  of agents matched with firm $y$ writes: 
\begin{equation*}
h(y) = \int h(x,y) dx =  \frac{\mu K(\mu + \lambda_{tot})}{K\lambda_{tot}+\mu} \lambda_1(y) \int \frac{\ell(x) }{G(x,y)^2} dx   .
\end{equation*}

As in the previous section we will denote $m$ the mass of matched agents and  we want to find  the ``market share'' of the firm indexed by $y$. As we are in a continuum, each firm has a market share of measure $0$, but we can still measure relative market shares by considering the following differential: 

\begin{equation} \label{eq:formulas}
   s(y)= \lim_{dy\rightarrow 0} \frac{1}{m}\frac{\int_y^{y+dy}h(u)du}{dy}= \frac{\mu (\mu + \lambda_{tot})}{\lambda_{tot}}\lambda_1(y) \int \frac{\ell(x) }{G(x,y)^2} dx. 
\end{equation}

This will be the main quantity of interest for the rest of this section and that can be seen as an infinitesimal rescaled equivalent of the market share. We sometimes call this quantity ``size'' of firm $y$.\footnote{Informally, when the number of firms is finite and equal to $N$, $s(y) \simeq \tilde{s}(y)\times N$ where $\tilde{s}(y)$ is the true market share of $y$. As $\tilde{s}(y)$ goes to zero when $N$ goes to infinity, we have to consider its rescaled version $s(y)$. Note that $s(y)$ can be greater than 1.  }

Remark: The market share of firms does not depend on the factor $K$. \\ 


\textbf{Result 1: When the frictions are negligible, the size of a firm is proportional to the fraction of agents who rank this firm as their favorite.} 

When the ratio $ r_f = \mu/\lambda_{tot} $ which measures the intensity of frictions goes to zero, then $ s(y) \rightarrow \ell(y) $. We prove this result in Annex \ref{annex_sec_multifirms_part2}.
This result is consistent with the results of the toy model with two firms. Without any friction, the firm size distribution simply corresponds to the preferences' distribution.

\subsection{Constant meeting rates} \label{subsec:multiconst}
In this section we take $ \lambda_1(y)= \lambda_{tot}$ so that $\lambda_1$ does not vary across firms. Then Equation \eqref{eq:formulas} simplifies into:  
\begin{align*}
 s(y) &=  \int \frac{r_f (r_f + 1)\ell(x)}{(r_f+\int_{y'}\mathbbm{1}[\sigma(x,y')>\sigma(x,y)] dy')^2} dx \\
&=  \int \frac{r_f (r_f + 1)\ell(x)}{(r_f+2d(x,y))^2}dx = \ell* K (y). 
\end{align*}

 

 $s(y)$ is the convolution of $\ell$ and $K$ a kernel $K(t)=\frac{r_f (r_f + 1)}{(r_f+2d(t))^2}$ with $d(t)=\min_{k\in \mathbb{Z}}|t +k|= \min(t, 1-t)$. Therefore $s(y)$ behaves as a weighted moving average of $\ell$ around $y$, with a characteristic bandwidth proportional to $r_f$. Since it is an average, it behaves more smoothly than $\ell$.\\

  When $r_f$ goes to $0$, the bandwidth shrinks. The kernel $K$ becomes close to a Dirac function and $ s(y) \rightarrow l(y) $. This result does not come as a surprise and is in line with results from Section \ref{sec:twosup}. When there is no friction, each agent is matched to its favorite firm, and thus the size of a firm is proportional to the fraction of agents who rank this firm as their favorite.\\

 \textbf{Result 2: When the friction intensity goes to infinity, firms sizes are uniformly distributed.}
 
 On the contrary, when $r_f$ goes to infinity, $K(t)\rightarrow 1$ for all $t$, the kernel becomes constant, and each firm has the same size in the limit. 
 Once again this is no surprise: in this setup, when frictions are too strong, agents only have time to meet one firm, that they pick uniformly among all firms. This is consistent with the results of section \ref{subsec:constantmeetingrate}, with constant meeting rates and two firms. \\

 \textbf{Result 3: With constant meeting rates, frictions have an homogenizing effect.} 
 
 In presence of heterogeneous preferences, what we mean by having an homogenizing effect needs to be defined formally. We find that, in presence of search frictions: 
 
 \begin{itemize}
     \item $\max_y s(y) \leq \max_y l(y)$: the biggest firm has a market share that is lower than it would have without frictions. The same way, $\min_y s(y) \geq \min_y l(y)$: frictions have an averaging effect that smooth extreme values.
     \item $\Var_y(s(y)) \leq \Var_y(l(y))$: this smoothing leads to a reduction of the variance as compared to the frictionless case.
 \end{itemize}
 
 All of this is due to the fact that $s$ behaves as a moving average of $\ell$. We take an example to get a sense of the quantitative effect of frictions in that set up. Let us take for instance $l(x)=5 \times \mathbbm{1}[x \in[0.4,0.6]]$. Figure~\ref{fig:syVSyCONST} shows how the distribution $s(y)$ of firm size changes with the intensity of frictions $r_f$. The green line represents the distribution of agents' preferences. When frictions are non-existent -i.e. when $r_f = 0$ - the size distribution is simply the preferences distribution. The figure shows that the higher the frictions - i.e. the higher $r_f$ - the smoother the distribution. Figure~\ref{fig:varVSrfCONST} completes this picture by showing that the variance of the firm size distribution $s(y)$ decreases with the intensity of frictions. \\

This homogenizing effect of frictions is in line with the results of the toy model with constant meeting rates. Introducing heterogeneous preferences therefore does not affect the impact that search frictions have on the firm size distribution, when meeting rates are constant. We show in the next sub-section that when meeting rates increase in the firms' market shares, the results change dramatically.

\begin{figure}
    \centering
    \includegraphics[width=0.62\linewidth]{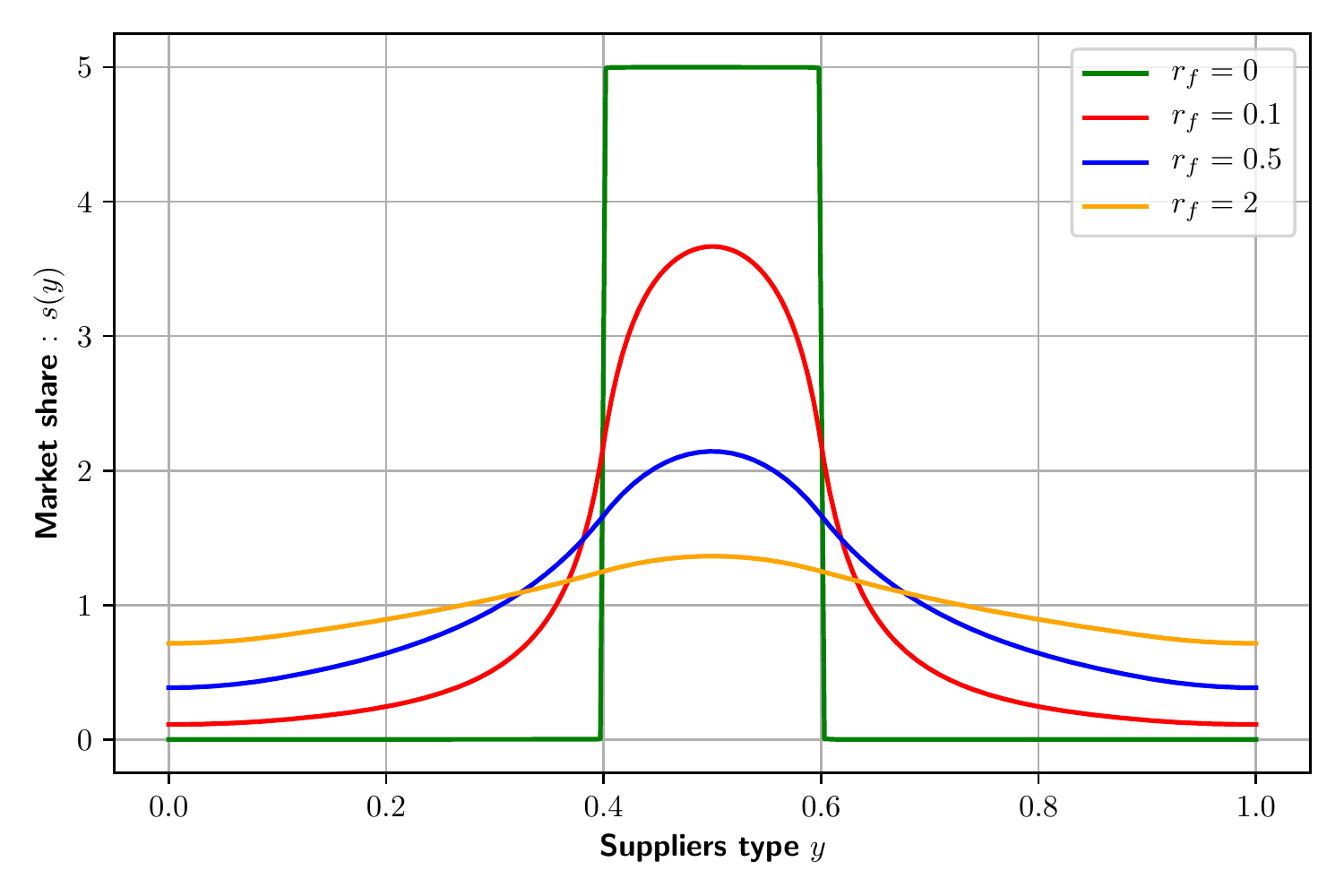}
    \caption{Market share depending on firms types, for different  $r_f$. }
    \label{fig:syVSyCONST}

    \centering
    \includegraphics[width=0.62\linewidth]{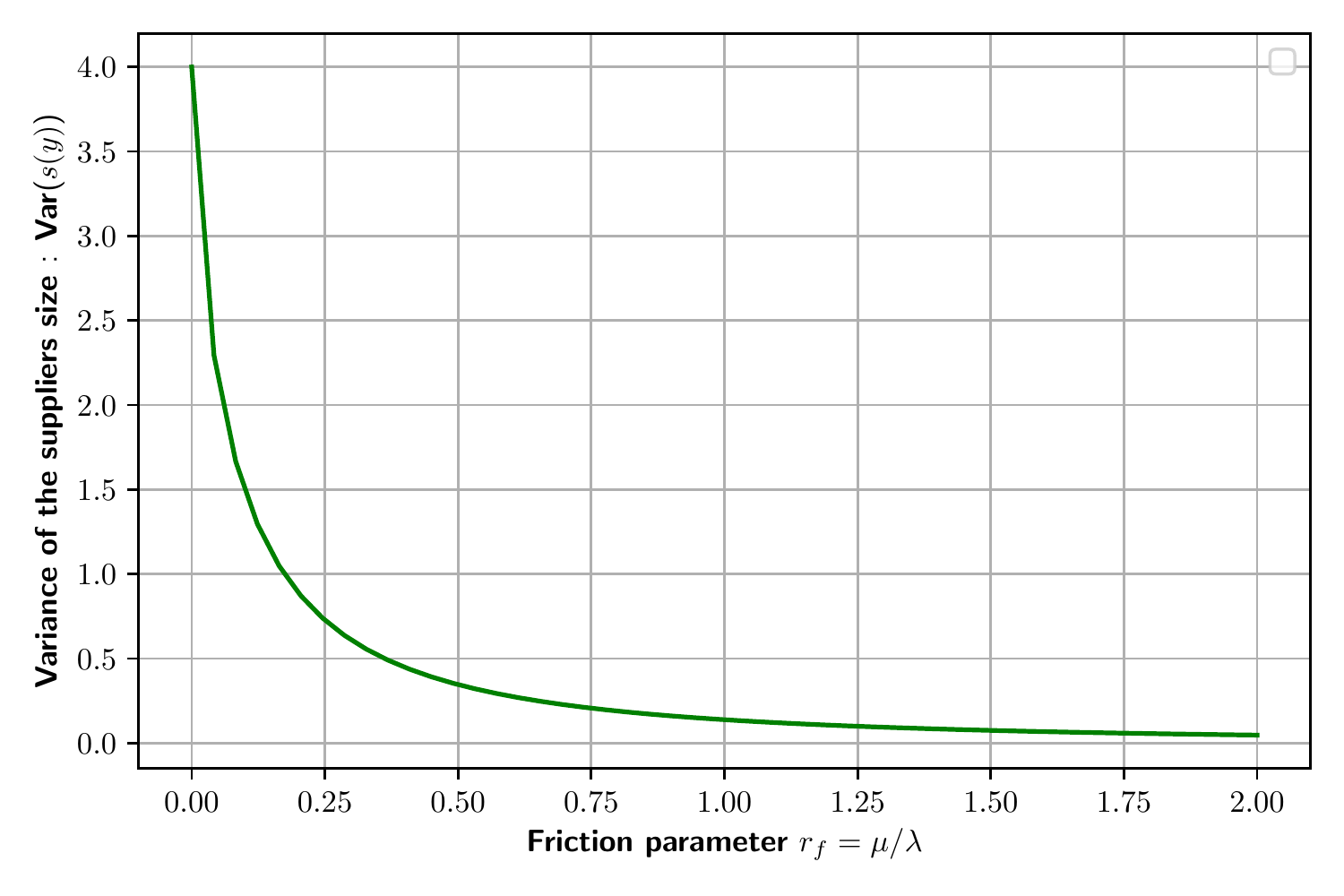}
    \caption{Variance of the market share depending on $r_f$.}
    \label{fig:varVSrfCONST}
\end{figure}

\subsection{Meeting rate proportional to the market share}

We now consider the case where the meeting rate of a firm is directly proportional to its size, as in Section \ref{subsec:propmarketshare}. We assume that for a given firm $y$: $\lambda(y)= \lambda_{tot}  \times s(y) $. 


The main equation on $s(y)$ is more complicated than in the previous case because it is implicit. Indeed Equation~\ref{eq:formulas} gives, for all $y$ so that $s(y) >0$,
$$\int \frac{r_f (r_f + 1)\ell(x)}{(r_f+\int_{y'}s(y')\mathbbm{1}[\sigma(x,y')>\sigma(x,y)] dy')^2} dx=1.$$

Let us first note that, like in the previous case, when the frictions are negligible, the size of a firm is proportional to the fraction of agents who rank this firm as their favorite. We will now present a first result that claims that, as in the two firms case, a winner-takes-all scenario will prevail when the friction intensity is high enough.

Let us denote $  \mathcal{S}$ the set of every interval of size $1/2$ on the interval $[0,1]$ regarded as a circle. Let us denote $\mathcal{S}_y=\{I \in | y \notin I\}$ the set of intervals in $\mathcal{S} $ that do not contain $y$.

\begin{theorem} \label{theo:concentration}
 If there is $y^*$ so that $\int_I \ell(x) < 1/2$ for every interval $I$ in $ \mathcal{S}_{y^*}$, then, when $r_f$ goes to infinity, the distribution of $s(y)$ converges to the Dirac function $\delta_{y^*}$. 
\end{theorem}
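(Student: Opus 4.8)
My plan is to treat the equilibrium $s$ as a probability measure on the circle (recall $\int_0^1 s(y)\,dy=1$, which follows from mass conservation) and to read Equation~\eqref{eq:formulas} as the complementarity condition $s(y)\,[\Phi(y)-1]=0$, where
\[
\Phi(y)=r_f(r_f+1)\int\frac{\ell(x)}{(r_f+B(x,y))^2}\,dx,\qquad B(x,y)=\int s(y')\,\mathbbm{1}[\sigma(x,y')>\sigma(x,y)]\,dy'.
\]
Surviving firms satisfy $\Phi(y)=1$; I would additionally use that at a stable equilibrium $\Phi(y)\le 1$ holds for \emph{every} $y$ (a dead firm with $\Phi>1$ would grow if seeded with infinitesimal mass). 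Establishing this selection --- e.g.\ as the $\alpha\uparrow 1$ limit of the affine problem, or via a tâtonnement --- is the one genuinely model-dependent step, and I would settle it first, since everything else is unconditional.

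The engine is a uniform expansion. Because $B(x,y)\in[0,1]$ always, $\zeta(B):=r_f(r_f+1)/(r_f+B)^2=1+(1-2B)/r_f+O(1/r_f^2)$ uniformly on $[0,1]$, so
\[
\Phi(y)=1+\frac{1-2\bar B(y)}{r_f}+O\!\left(\tfrac{1}{r_f^2}\right),\qquad \bar B(y):=\int B(x,y)\,\ell(x)\,dx .
\]
Hence $\Phi(y)\le 1$ forces $\bar B(y)\ge \tfrac12-C/r_f$ uniformly in $y$. The crucial reformulation is that, by Fubini, $\bar B(y)=\int \ell(H_{y',y})\,ds(y')$ with $H_{y',y}=\{x:d(x,y')<d(x,y)\}$ \emph{exactly the half-circle of agents who rank $y'$ above $y$}. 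This is what couples the statement to the hypothesis: for any $y'\neq y^*$ the half-circle $H_{y',y^*}$ excludes $y^*$, so the assumption gives $\ell(H_{y',y^*})<\tfrac12$, while the strict-preference convention in $B$ makes the $y'=y^*$ term vanish since $H_{y^*,y^*}=\varnothing$. As a consistency check, the hypothesis forces the two complementary half-circles abutting $y^*$ to carry mass exactly $\tfrac12$ each, so $y^*$ is precisely marginal --- which is why the winner collapses onto the point $y^*$ rather than sliding to one side.

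For convergence I would use compactness: probability measures on the circle are weak-$*$ compact, so it suffices that every subsequential limit $s_\infty$ equals $\delta_{y^*}$. Suppose not; then $\beta:=s_\infty(\{d(\cdot,y^*)>\epsilon\})>0$ for some $\epsilon>0$, and by the portmanteau inequality on this open set $s_{r_f}(\{d(\cdot,y^*)>\epsilon\})\ge\beta/2$ along the subsequence for large $r_f$. On the compact set $\{d(\cdot,y^*)\ge\epsilon\}$ the continuous map $y'\mapsto\ell(H_{y',y^*})$ stays strictly below $\tfrac12$ (hypothesis plus continuity of $\ell$), hence is bounded by $\tfrac12-\eta$ for some $\eta>0$; using $\ell(H_{y',y^*})\le\tfrac12$ elsewhere,
\[
\bar B(y^*)=\int \ell(H_{y',y^*})\,ds_{r_f}(y')\le \tfrac12-\eta\, s_{r_f}(\{d(\cdot,y^*)>\epsilon\})\le \tfrac12-\tfrac{\eta\beta}{2}.
\]
This contradicts $\bar B(y^*)\ge\tfrac12-C/r_f$ once $r_f$ is large, so $s_\infty=\delta_{y^*}$ and the whole family converges weakly to the Dirac mass.

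I expect the obstacle to be twofold. Conceptually, the load-bearing input is the selection inequality $\Phi(y^*)\le 1$ off the support, since without it I cannot test the hypothesis at $y^*$ (testing at a support point only yields $\bar B=\tfrac12$, which the $y^*$-specific assumption does not contradict). Technically, the kernel $y'\mapsto\ell(H_{y',y^*})$ is discontinuous at $y'=y^*$, so I deliberately route the final estimate through the \emph{mass} $s_{r_f}(\{d>\epsilon\})$ of an open set --- controlled by portmanteau --- rather than passing the discontinuous integrand through the weak limit. The remaining checks (the uniform remainder in the expansion, and the gap $\eta$, which needs only continuity of $\ell$ and strictness of the hypothesis on half-circles bounded away from $y^*$) are routine.
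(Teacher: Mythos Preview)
Your argument is correct modulo the selection step you flag, and the route is genuinely different from the paper's. The paper does not work with $\Phi$ or weak-$*$ compactness; instead it lumps firms into two super-firms $A=[y^*-\epsilon,y^*+\epsilon]$ and $B=[0,1]\setminus A$, argues that when $r_f\to\infty$ only agents who have met one or two firms count (so the continuum collapses to the two-firm proportional-rate model of Section~\ref{subsec:propmarketshare}), and then invokes that closed-form result after bounding $p_b\le \sup_{y'\in B}\int \mathbbm{1}[\sigma(x,y')>\sigma(x,y^*)]\,\ell(x)\,dx<1/2$ via the half-circle hypothesis. Both proofs therefore isolate the same first-order object --- the comparison of $\ell(H_{y',y^*})$ with $1/2$ --- but the paper's super-firm aggregation is heuristic (it is not justified beyond the order-of-magnitude remark), whereas your expansion of $\Phi$ plus portmanteau is rigorous once the selection inequality $\Phi(y^*)\le 1$ is granted. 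Note that the paper does not escape selection either: in the two-firm proportional system $s_A=0$ and $s_A=1$ are always fixed points, and the paper silently picks the branch given by its explicit formula. Two small remarks on your write-up: you do not actually need continuity of $\ell$ for the gap $\eta$, since $y'\mapsto \int_{H_{y',y^*}}\ell$ is automatically continuous for $\ell\in L^1$ by dominated convergence; and your decision to route the final estimate through $s_{r_f}(\{d>\epsilon\})$ rather than pass the discontinuous integrand to the weak limit is exactly right.
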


\textbf{What does Theorem \ref{theo:concentration} say ?}
Theorem \ref{theo:concentration} means that for all interval $I$ that does not contain  $y^*$, $\int_I s(y)dy=0$ for large enough frictions. We have just as in Section \ref{subsec:propmarketshare} a concentration of the market in one particular point while all other points are being dried out. Theorem \ref{theo:concentration} characterizes the point of concentration $y^*$. It is not in general the point $x_m$ where $\ell$ reaches a maximum. We note that $x_m$ is the point where the size reaches a maximum in frictionless cases, meaning that in this setup the \emph{friction intensity changes the ranking of firm sizes.} The point $y^*$  is the ``median point'' of the distribution $\ell$: the intervals $[y^*, y^*+1/2]$ and $[y^*, y^*-1/2]$ (regarded as intervals over a circle) both contain half of the mass of the distribution $\ell$.\footnote{ In which cases doesn't Theorem~\ref{theo:concentration} apply ?
In some degenerate cases, there might not be any points so that $\int_I \ell(x) < 1/2$ for every interval $I$ in $ \mathcal{S}_{y^*}$. This the case for instance when $\ell$ is constant or when $\ell(x)=1+\cos(2x/(2 \pi))$. In the first case, the perfect symmetry of the problem ensures that firms sizes remain equal regardless of the frictions. The second case is not covered by the theorem, and there again the perfectly similar conditions on points $x$ and $x+1/2$ would certainly prevent the concentration of market share in only point. If there are strong oscillations in $\ell$, Theorem \ref{theo:concentration} may not be applicable.} Simple examples where Theorem~\ref{theo:concentration} applies include cases where $\ell$ is unimodal. \\ 

The most important take-away of Theorem \ref{theo:concentration} is that, even when considering a continuum of firms, the way meeting rates depend on firms size matters. As in the case with two firms, a concentration phenomenon as well as a winner-takes-all phenomenon happen in this setting when frictions go to infinity. This is to our knowledge the first time that such a phenomenon is derived as the consequence of search frictions.


This is not possible to obtain a closed-form expression for $s(y)$ in this case. In the next sub-section, we go beyond Theorem 2 with the help of numerical approximations and provide further evidence on the effect of frictions on the firm size distribution.


\subsection{Numerical evidence} \label{subsec:numerical}


We now take $\lambda(y)=\lambda_{tot}(\alpha s(y)+(1-\alpha) )$.\footnote{We normalize here the mass of firms to one.} We compute numerically the value of $s$ for different values of $\alpha$ (see Annex \ref{annex:numerical}). \\

First, we study a variation of the previous example with $\ell(x)=2 \times \mathbbm{1}[x \in[0.25,0.75]]$. This example is one of the simplest one can think of as agents' preferences are single-peaked. We showed in Figure~\ref{fig:syVSyCONST} that with constant meeting rates, higher frictions lead to a smoother firm size distribution. We exhibit in Figure~\ref{fig:syVSyPROP1a} the equilibrium firm size distribution, when we take $\alpha = 0.8$. \textbf{The effect of search frictions is non-motonic:} search frictions first concentrate the preferences' distribution - blue and red curves - then attenuate it - yellow curve. This result is in line with results from Section \ref{sec:twosup}: when $0<\alpha<1$, an increase in frictions does not always have the same effect: it can concentrate the distribution (for small values of friction intensity) or smooth it (for higher values). The threshold between those two behaviours depend both on the value of $\alpha$ and on the shape of the preference distribution $\ell$. \\

In Figure~\ref{fig:syVSyPROP1b}, we display a similar similar Figures with a higher dependency of the meeting rate on the firm's market share: $\alpha = 0.99$. We see that with this higher $\alpha$, the firm size distribution is more concentrated for every level of friction intensity $r_f$ when compared with $\alpha = 0.8$. For instance for $r_f=3$, the effect is smoothing for  $\alpha = 0.8$ while it is concentrating for $\alpha = 0.99$. From a number of numerical analysis conducted, we formulate the following hypothesis: \textbf{higher values of $ \alpha$ lead to more concentrated distributions of $s$}, everything else remaining equal. While this seems hard to prove formally as  $s$ has no closed-form expression, it is an outcome that arises repeatedly in our numerical studies.   \\

Second, we take the example of double-peaked preferences to illustrate the fact derived in the previous Section that the firms benefiting from frictions can be different from the most preferred firms where $\ell$ reaches a maximum, even in cases where frictions concentrate the distribution (meaning they decrease its variance). Theorem \ref{theo:concentration} states that the maximum of $s$ is reached at the median firm, where all agents tend to concentrate when $\alpha=1$. Figure~\ref{fig:syVSydoublepic}  shows such a phenomenon arises even for smaller values of $\alpha$. Here the median firm has a size bigger than $\max_y \ell(y)$, which can never occur when $\alpha=0$ (as mentioned in Section \ref{subsec:multiconst}).  \\

\begin{figure}
\begin{subfigure}[t]{0.45\textwidth}
        \centering\includegraphics[width=\linewidth]{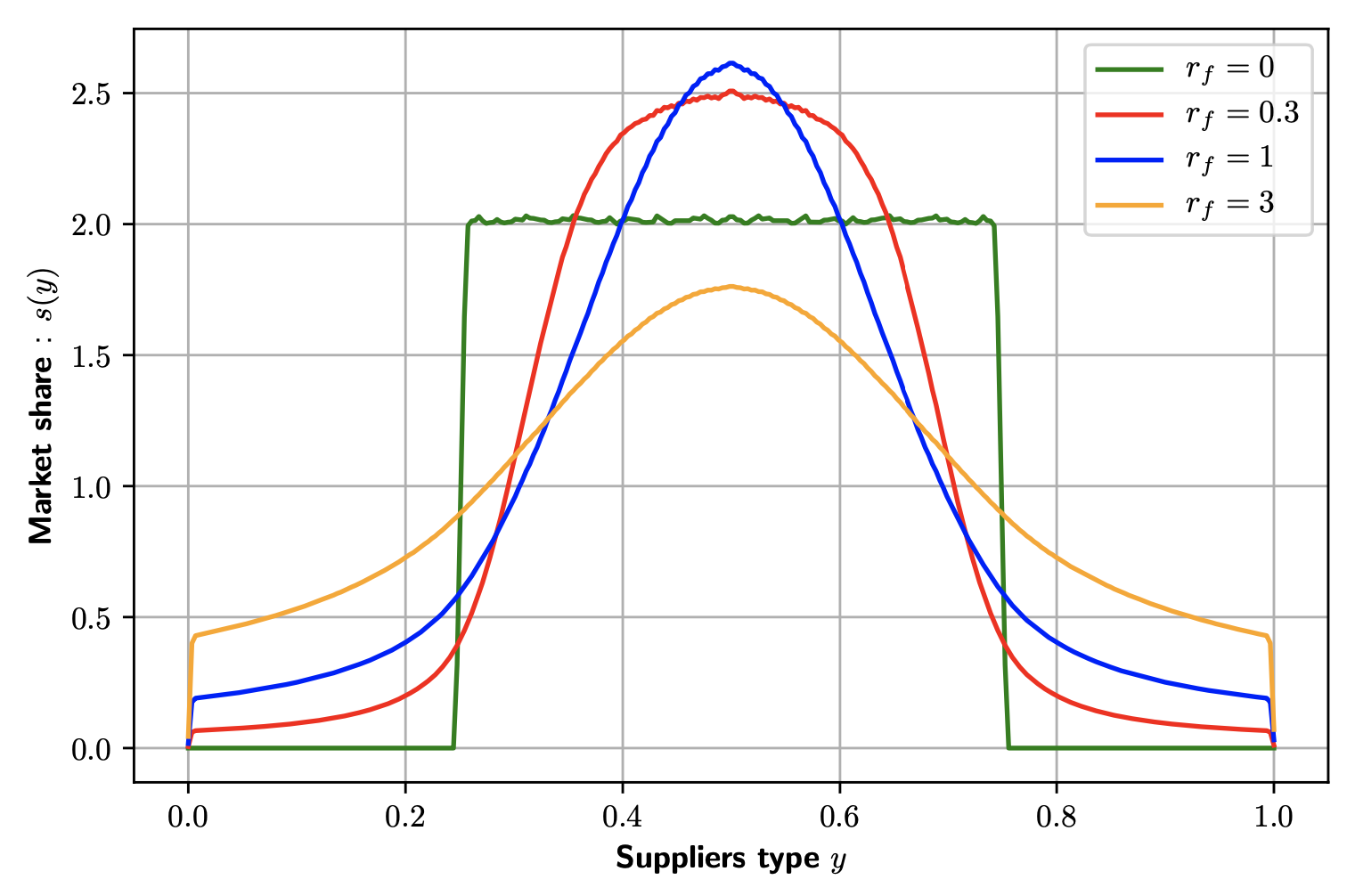}
        \caption{Market share depending on firms type, for different $r_f$, with $\alpha = 0.8$.}
        \label{fig:syVSyPROP1a}
\end{subfigure}
\hfill
\begin{subfigure}[t]{0.45\textwidth}
        \centering\includegraphics[width=\linewidth]{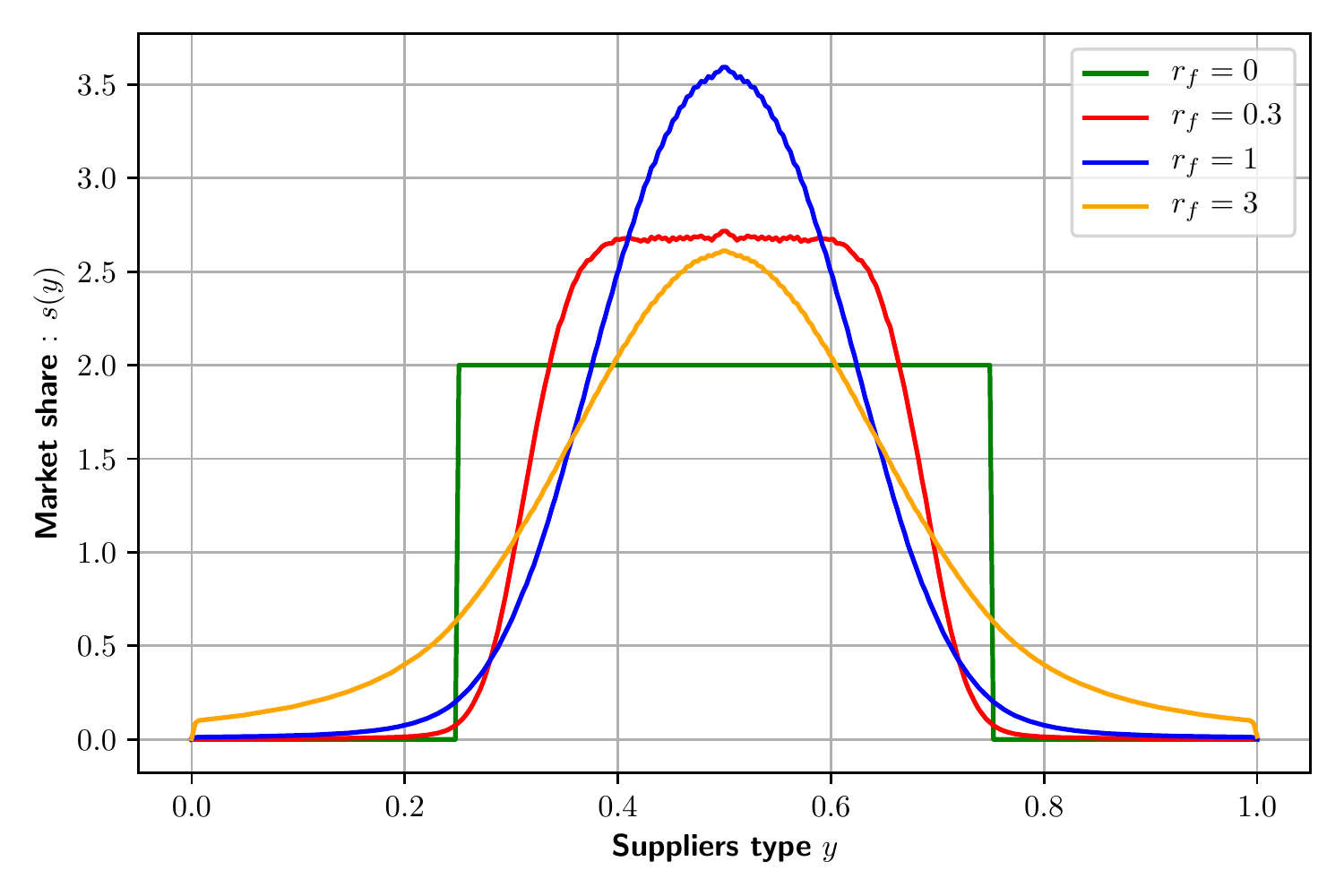}
        \caption{Market share depending on firms type, for different $r_f$, with $\alpha = 0.99$.}
        \label{fig:syVSyPROP1b}
\end{subfigure}        
\end{figure}

\begin{figure}
        \centering\includegraphics[width=0.6\linewidth]{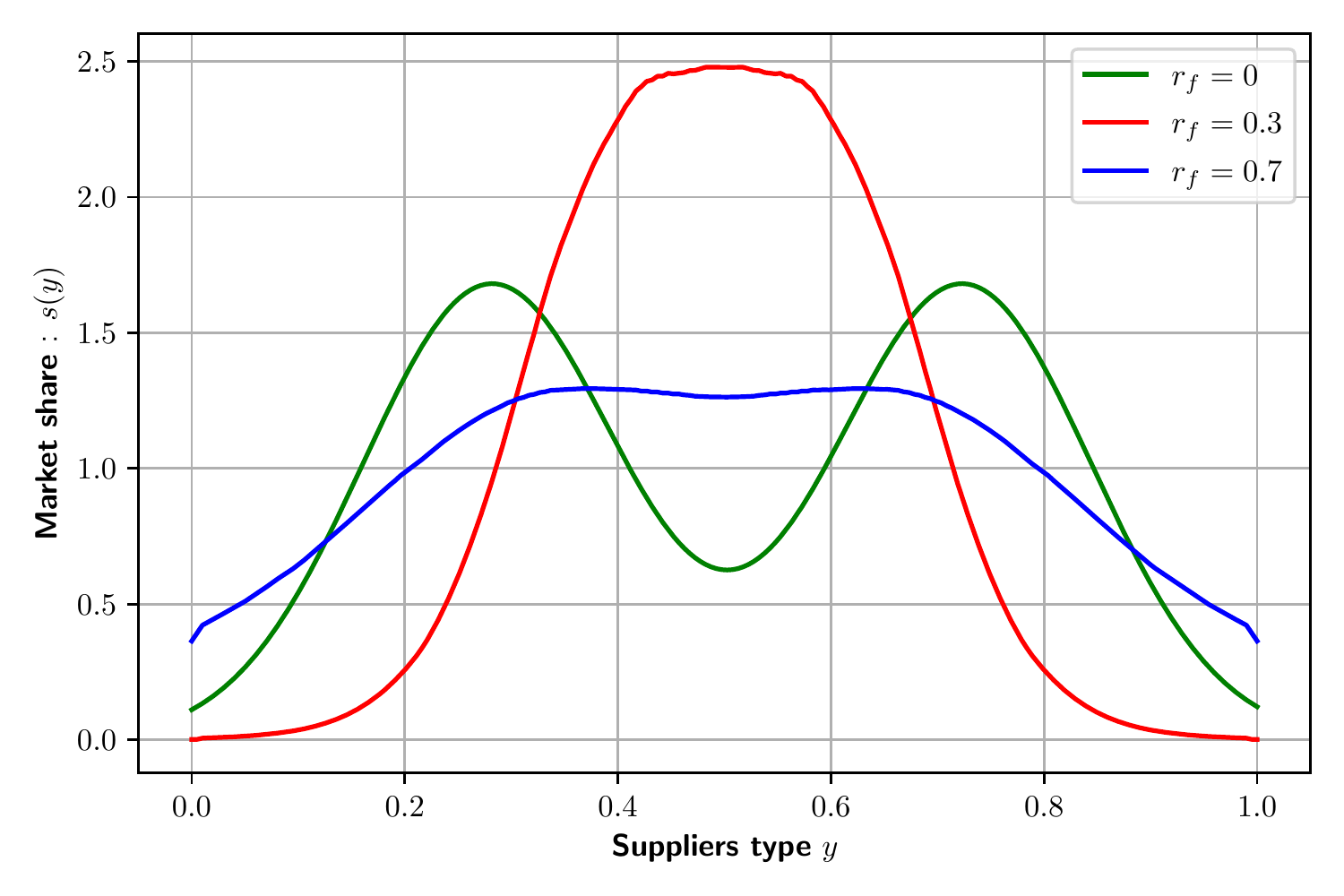}
        \caption{Market share depending on firms type, for different $r_f$, with $\alpha = 0.95$.}
        \label{fig:syVSydoublepic}
\end{figure}

\subsection{Optimal $\alpha$ and externalities}

After observing that the value of $\alpha$ has a crucial impact on the firm size distribution, it is natural to ask if $\alpha$ has an impact on the efficiency of the matching procedure. We define the efficiency of matching as the sum of all the surpluses of matched pairs. This surplus is maximal when all agents are matched to their preferred firm.

From Equation \eqref{hxy}, we can compute the total efficiency of the matching:
\begin{align*}
    \text{Eff} (\alpha)&=\int h_\alpha(x,y)\sigma(x,y) dx dy \\
    &= \int \frac{\mu K(\mu + \lambda_{tot})}{K\lambda_{tot}+\mu}\frac{\ell(x) \lambda_{\alpha}(y)}{G_\alpha(x,y)^2}\sigma(x,y) dx dy \\
    &\propto \int \frac{\ell(x) (\alpha s_\alpha(y)+(1-\alpha))}{\left( r_f + \int \mathbbm{1}[\sigma(x,y')>\sigma(x,y)] (\alpha s_\alpha(y')+(1-\alpha)) dy' \right)^2}\sigma(x,y) dx dy.
\end{align*} 
where $s_{\alpha}(y)$ is the market share of firm $y$ when the meeting rate is equal to: \\
$\lambda(y)=\lambda_{tot} \left( \alpha s(y) + 1-\alpha \right)$. \\ 

Taking $\sigma(x,y)=1-d(x,y)$ and using our numerical approximation of $s$, we plot the efficiency of the matching as a function of $\alpha$, for a given level of frictions ($\lambda$, $\mu$ and $K$ are fixed). The graph is presented in Figure \ref{fig:Efficiencyvsalpha}, for $r_f=2$ and for a Gaussian $\ell$. We notice that $\text{Eff} (\alpha)$ is non monotonic in $\alpha$, which means that, for a given level of frictions, there is an optimal $\alpha$ that maximizes the efficiency and that is neither $0$ nor $1$. This implies that the firm size distribution can be either too concentrated or too flat for efficient matching to occur. The optimal value of $\alpha$ changes with the level of frictions, as illustrated by Figure \ref{fig:Efficiencyvsalphab}. \\

Let us assume that one agent wants to maximize her own match surplus, taking the $\tilde \alpha$ chosen by other agents as given, not knowing the position of its preference $x$ in the distribution (but knowing the global shape of the preference distribution $\ell$). Then her optimization program is: 
\begin{align*}
    \alpha(\tilde \alpha)& =\argmax_\alpha \int \frac{\ell(x) (\alpha s_{\tilde \alpha}(y)+(1-\alpha))}{\left( r_f + \int \mathbbm{1}[\sigma(x,y')>\sigma(x,y)] (\alpha s_{\tilde \alpha}(y')+(1-\alpha)) dy' \right)^2}\sigma(x,y) dx dy. \\
    & :=\argmax_\alpha \mathcal{U}_{\tilde \alpha}(\alpha)
\end{align*} 
 The Nash equilibrium is then given by the equation $ \alpha(\tilde \alpha^*)=\tilde\alpha^*$. In general $\tilde\alpha^*\neq \argmax_\alpha \text{Eff} (\alpha)$, as illustrated in Figure \ref{fig:EffUt}, where we plot both $\text{Eff} (\alpha)$ and  $\mathcal{U}_{\tilde \alpha}(\alpha)$. For a single-peaked $\ell$ function for instance, numerical simulations indicate that the best response to any value of $\tilde \alpha$ is always $1$. This does not come as a surprise: it is in the agent's interest to get the most information out of the firm size distribution and to meet first firms $x$ with high values of $\ell (x)$, since they are most likely to be the agent's favorite - because they are the favorite of a large share of agents. This means that a social planner would choose a different $\alpha$ than individual maximizing agents.  \\

This phenomenon comes from the fact that agents do not internalize their effect on the  firm size distribution. They typically want a higher value of $\alpha$ than the optimal one, since they do not take into account that their choice and their match will concentrate the firm size distribution beyond the optimal level. We discuss in Section \ref{sec:heterogmeetingrates} how this phenomenon can relate to the increase in $\alpha$ over the years. \\ 

We have seen in this Section that, in presence of heterogeneous preferences, the impact of search frictions crucially depends on the shape of meeting frictions. In the following section, we study the empirical shape of the meeting rates in the case of Firm to firm export. In particular we ask if modelling meeting rates as constant matches the data. If so, then frictions always have an homogenizing effect and a decrease in friction intensity causes a concentration of the market as stated in \cite{bai2020search,albrecht2022vertical,lenoir2022search,mazet2021information}. If not, then this needs not be the case.

\begin{figure}
\begin{subfigure}[t]{0.45\textwidth}
        \centering\includegraphics[width=\linewidth]{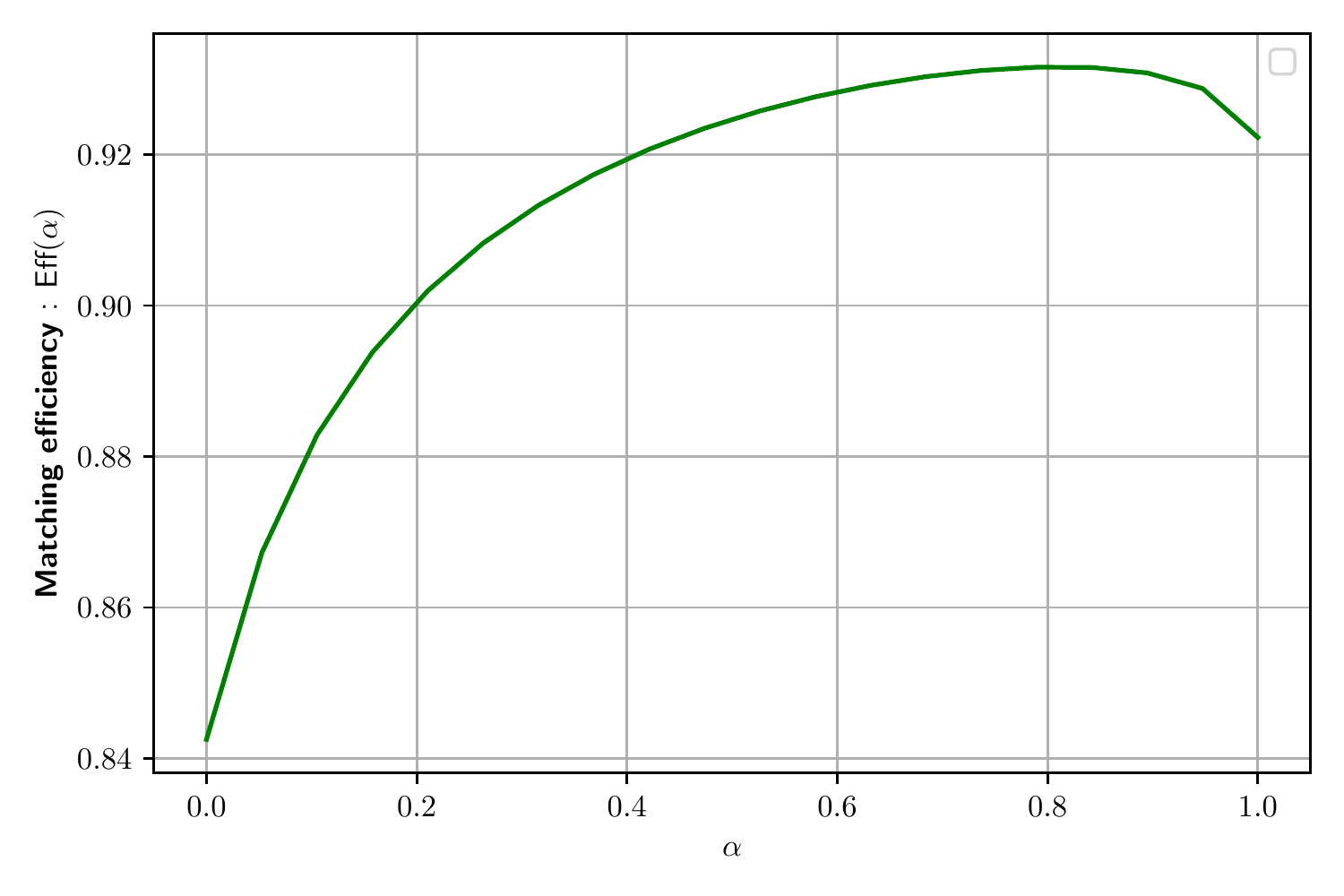}
        \caption{Matching efficiency depending on $ \alpha$, $r_f=0.2$.}
        \label{fig:Efficiencyvsalpha}
\end{subfigure}
\hfill
\begin{subfigure}[t]{0.45\textwidth}
        \centering\includegraphics[width=\linewidth]{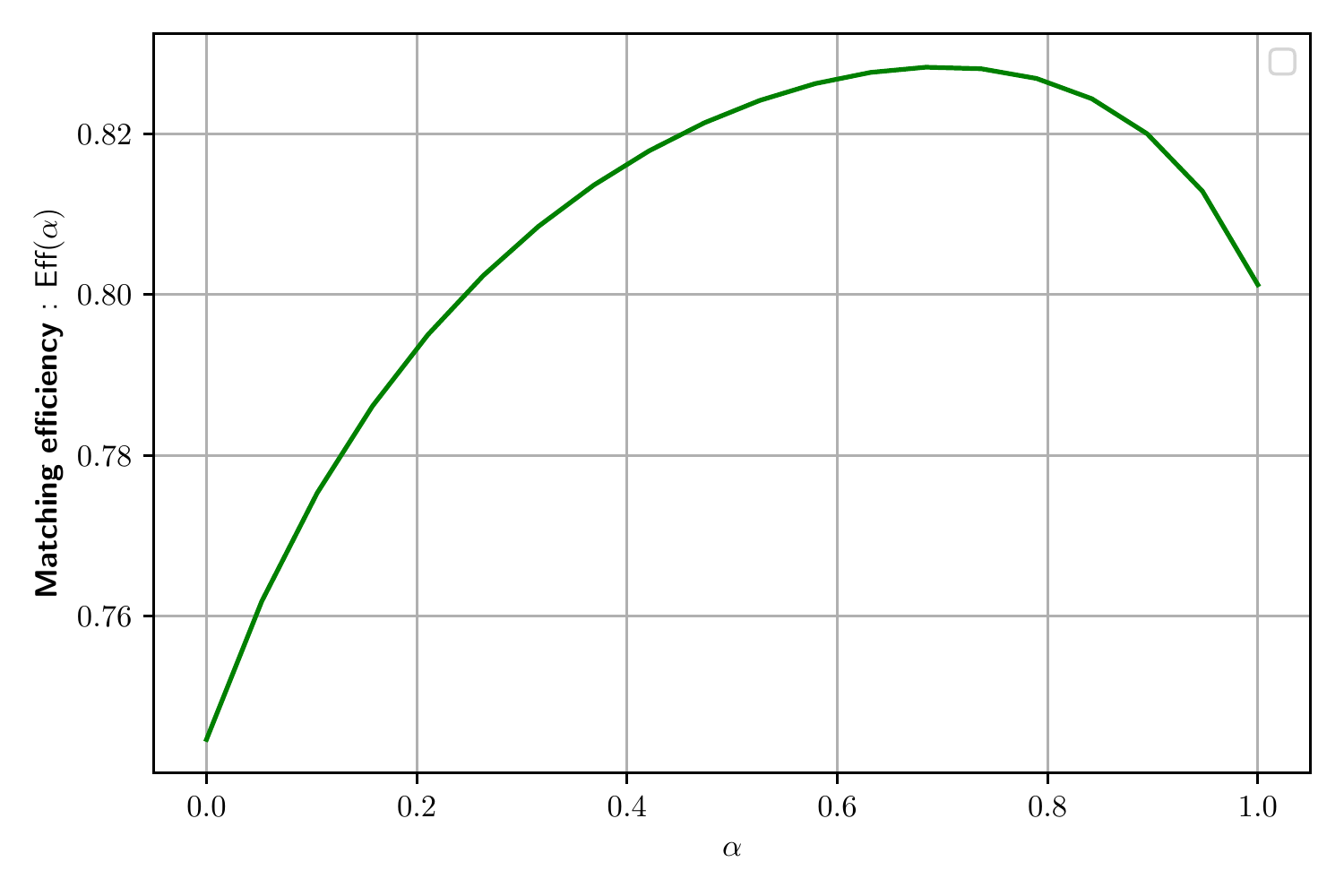}
        \caption{Matching efficiency depending on $ \alpha$, $r_f=0.5$.}
        \label{fig:Efficiencyvsalphab}
\end{subfigure}        
\end{figure}

\begin{figure}
    \centering
    \includegraphics[width=0.75\linewidth]{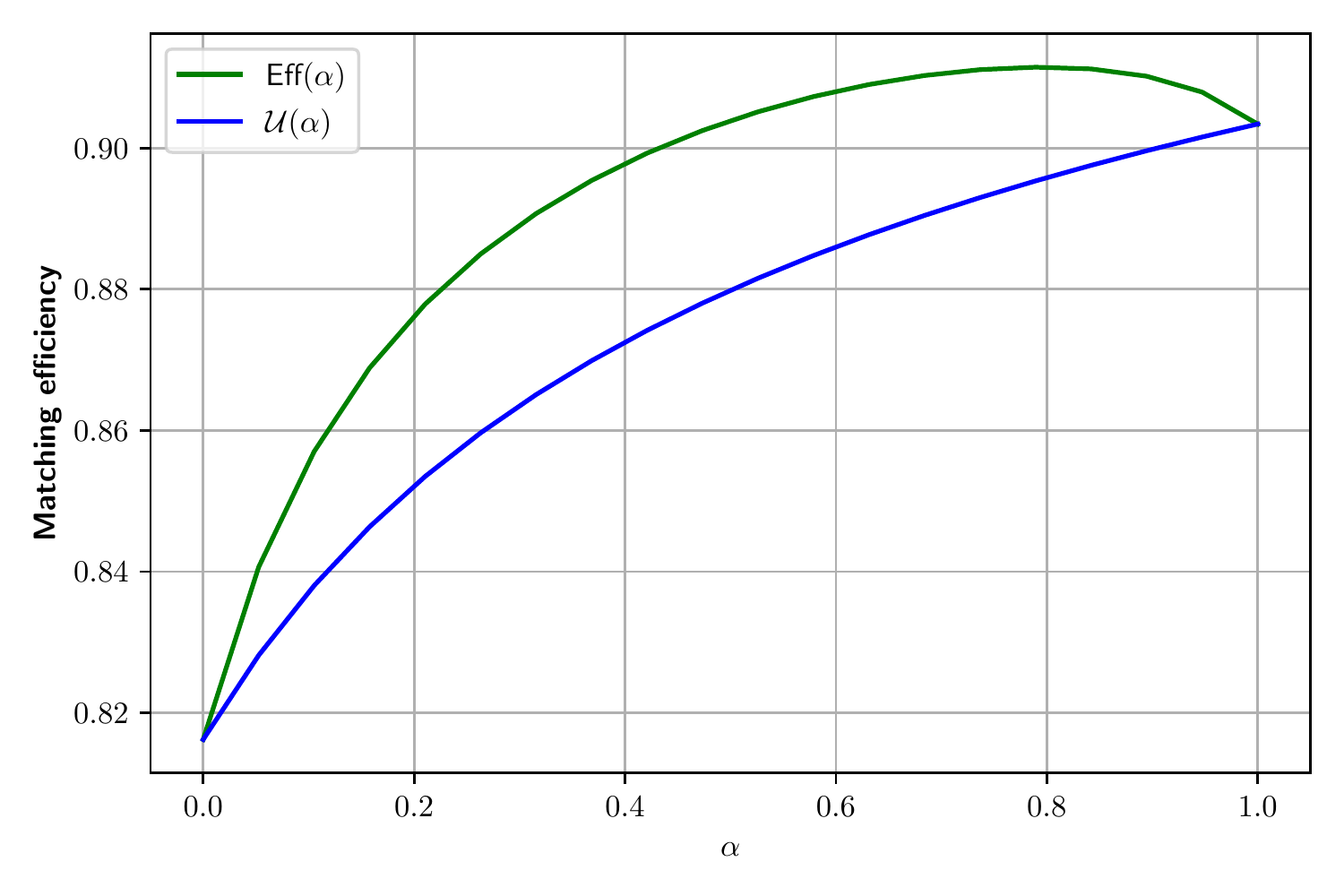}
    \caption{Matching efficiency and utility depending on $ \alpha$, for $\tilde \alpha$=1.}
    \label{fig:EffUt}
\end{figure}

\section{A case for heterogeneous meeting rates}
\label{sec:heterogmeetingrates}

In the previous Sections, we have shown that, in presence of heterogeneous agents preferences, the way the meeting rates $\lambda(y)$ depend on the firms' market shares $s(y)$ is decisive for the equilibrium firm size distribution. In particular, the parameter $\alpha$, which governs the dependency between the meeting rates and the market shares, determines whether frictions tend to favor the least preferred or the most preferred firms of the market. 
In this Section, we evaluate the value of $\alpha$, to determine which of these regimes is closest to reality. We first review what the previous literature has shown. Second, we estimate $\alpha$ for the international goods market, where foreign buyers match with French firms. We find that the meeting rate is close to proportional to the firms' market share, and that $\alpha$ has increased over the years. \\ 

\subsection{Literature}
In Labor economics, two extreme benchmark cases are often analyzed: random matching in which all firms have an equal probability of being sampled - this corresponds to the constant meeting rate case $\alpha=0$ - and balanced matching, in which the probability of being sampled is proportional to firm size - this corresponds to the case where the meeting rate is proportional to the market share $\alpha=1$.
A few papers have investigated the theoretical implications of having either random or balanced matching. For instance, \cite{burdett1988balanced} use a wage-posting model, and study specifically the implications of balanced matching, with respect to random matching on the wage distribution. \cite{mortensen1994personal} analyze a form of matching that mixes random matching and balanced matching. 
\cite{postel2002equilibrium} write and estimate a sequential auctions model, which does not specify any a priori functional form for the meeting rates. One theoretical result is that with balanced matching, the most productive firm hires all the workers of the economy. 
The estimation of the model leads them to reject both the assumptions of random and balanced matching.\footnote{More precisely, they find that firm's size is not monotonic in the hiring effort, and conclude that both the assumptions of random and balanced matching are rejected by the data.} To the best of our knowledge, the precise estimation of the way meeting rates depend on the size of firms has not been undertaken before in the labor literature.

The International Trade literature has adapted the Labor literature to study the matching of buyers to suppliers.
\cite{lenoir2022search} assume in their baseline model that the meeting rate of buyers is constant across suppliers, which corresponds to $\alpha=0$ in our setting. Then, they extend the model to a case where the meeting rate increases with the supplier's productivity. They show that frictions have qualitatively the same distortive impact in this extended setting. Yet, they do not estimate the shape of the meeting rate. 
\cite{eaton2019firm} also allow for the meeting rate with suppliers to depend on the supplier's productivity. They model congestion effects on the sellers' side: the matching intensity for a given supplier decreases with the mass of sellers which are more productive. Therefore the more productive a seller, the higher the meeting rate of a given buyer with this seller. 
As their paper estimates this congestion parameter assuming complex parametric functional forms and using the hypothesis of homogeneous preferences, it is hard to relate this parameter to $\alpha$ and to our work.

\subsection{Empirical evidence}

\textbf{Theory} Using the notations of Section~\ref{sec:multifirms}, we formulate the flows $f_t(u,y)$ of unmatched agents into the firm indexed by $y$ at time $t$ as follows:
\begin{equation*}
    f_t(u,y) = \lambda_{0t}(y) \times U_t
\end{equation*}
where $U_t$ is the number of unmatched agents at time $t$, and $\lambda_{0t}(y)$ is the meeting rate, by unmatched agents, of a firm of type $y$ at time $t$. Recall that we assumed that the meeting rate of firm $y$ takes the following linear form: 
\begin{equation*}
\lambda_{0t}(y) = \left( \frac{1-\alpha_t}{N} + \alpha_t s_t(y) \right) \lambda_{tot,t}^0
\end{equation*}
with $N$ the number of firms in the market, $s_t(y)$ the market share of firm $y$ among matched agents, and, by definition,  $\lambda_{tot,t}^0 = \sum_y \lambda_{0t}(y) $. \\

From these equations, we can write the share $s_t(u,y)$ of flows from the unmatched state that each firm $y$ gets at time $t$ as: 
\begin{equation*}
    s_t(u,y) = \frac{f_t(u,y)}{\sum_{y' \in Y} f_t(u,y')} = \frac{1-\alpha_t}{N} + \alpha_t s_t(y)
\end{equation*} 
Regressing  $s_t(u,y)$ on the market share $s_t(y)$ enables to recover an estimate of $\alpha_t$. Note that our estimation strategy does not rely on using steady-state outcomes given by our model. Our model is applicable to any setting with many-to-one frictional matching.\footnote{More precisely, our model applies to any partial equilibrium model with many-to-one matching, with no capacity constraint on the firm side, transferable utility and on-the-job search.}  We choose to apply it the international goods market.  \\ 

\textbf{Matching on the international goods market} We exploit a French firm-to-firm trade dataset which records the identity, for each French exporter, of its non-domestic buyers in the European Union. 
The dataset records for each transaction the identity of the exporting firm (its SIREN identifier), the identification number of the importer (an anonymized version of its VAT code), the date of the transaction (month and year), the product category (at the 8-digit level of the combined nomenclature) and the value of the shipment. The data cleaning and product harmonization over time for this data are described in \cite{bergounhon2018guideline}. We perform the analysis over the period 1996-2016. We define a buyer as a buyer $\times$ 8-digit product combination, and select each year the buyers which import from at most one French firm this year and the year before.\footnote{The 8-digit products are harmonized over the period 2005-2017, as explained in \cite{bergounhon2018guideline}.} This eliminates 6.4\% of buyers. Our final dataset includes 20,925,116 foreign buyers and 92,310 French firms. We then define markets as year $\times$ country $\times$ 6-digit product cells. One foreign buyer can thus belong to only one market. We have 39,549 markets in total.\footnote{In what follows, we are able to estimate $\alpha$ for 22,969 markets.} Note that declaration rules for exports change over our period of analysis.\footnote{From 1993 to 2000, the declaration threshold was set at 250,000 euros of exports in a given year. In 2001, it was set at 650,000. From 2002 to 2006, it was set at 100,000. From 2007 to 2010, it was set at 150,000. From 2011 to 2018, it was set at 460,000. See \cite{bergounhon2018guideline}.} We discuss this issue in the robustness tests below.  \\

First, we compute for each year $t$, each French firm and each market the number of buyers which were not matched in year $t-1$ to a French supplier and to which the firm starts selling in year $t$. Note that this measure is not exactly the empirical counterpart of $f_t(u,y)$. In our dataset, we only observe the connections of foreign buyers with French firms. We therefore do not know whether a foreign buyer not in our dataset is matched with a non-French firm or whether it is unmatched. What we observe is $f_t(u,y)+f_t(\bar{F},y)$, i.e. the flows into a supplier $y$ of buyers coming from non-French suppliers and the flows into a supplier $y$ coming from the unmatched state, but we cannot disentangle those two elements with our dataset.

To deal with that issue, 
we assume that among the flow into $y$ of buyers that were already matched, a fraction $\beta_1$ of them were matched with a French supplier.
In other words, we assume that there exists $\beta_1 \in [0,1]$ such that for each agent $x$ and each firm $y$:
\begin{align*}
    &\underbrace{\int_{y' \in F} \mathbbm{1}[\sigma(x,y)>\sigma(x,y')] h(x,y') dy'}_{f(F,y)}  = \\
    &\beta_1 \left( \underbrace{\int_{y' \in F} \mathbbm{1}[\sigma(x,y)>\sigma(x,y')] h(x,y') dy'}_{f(F,y)} + \underbrace{\int_{y' \in \bar{F}} \mathbbm{1}[\sigma(x,y)>\sigma(x,y')] h(x,y') dy'}_{f(\bar{F},y)} \right)
\end{align*}
 $\beta_1$ is the share of buyers who, conditionally on arriving to a firm $y$ and on being previously matched with a supplier in the EU, were previously matched with a French supplier.
 An approximation for $\beta_1$ is the share of France in the European Union's GDP. We use $\beta_1 = 0.15$ for our estimation, but we show that our estimated $\hat{\alpha}$ is not very sensitive to $\beta_1$.  \\
 
 We can therefore reconstruct from the data the flows of buyers from the unmatched state into firm $y$ as follows:
\begin{equation}
    f_t(u,y) = \underbrace{f_t(u,y)+f_t(\bar{F},y)}_{observed} - \frac{1-\beta_1}{\beta_1} \underbrace{ f_t(F,y)}_{observed}
\end{equation}\label{eq:adjustment_flows}
Second, we compute the total number of previously-unmatched buyers, the empirical counterpart of $\sum_y f_t(u,y)$,  to obtain the share $s_t(u,y)$. Third, we compute the market share $s_t(y)$ of each French firm in each market: we divide the number of buyers a firm has in a market in a given year by the total number of buyers of this market importing from France this year. \\

\textbf{Results} Figure~\ref{fig:f_totnbnewbuyers_s_y} displays the correlations between the share  $s(u,y)$ of the flows of unmatched buyers with the market share $s(y)$ of firms. 
The relationship seems fairly linear, which validates the functional form we used in our theory for the meeting rate. We then perform the regression of $s(u,y)$ on $s(y)$ market by market, in order to estimate distinct values of $\alpha$ for each market. We find that the average $\hat{\alpha}$ is equal to $0.75$. 
All standard test reject at any level the hypothesis that $\alpha=0$, hypothesis under which most of the literature operates as we saw in the previous section. 
With this estimation of $\alpha$ in hand, we show in the Appendix, see Figure~\ref{fig:finalsimulation}, that with plausible values of search frictions and preferences' heterogeneity, a rise in frictions leads to a higher concentration of the market, rather than a lower concentration of the market. The rise in market concentration in the last decades thus cannot plausibly be explained by the decrease in search frictions. \\

In Figure~\ref{fig:alpha_year}, we show that the estimated $\alpha$ has risen over time. It went from 0.63 in 1996 to 0.84 in 2016. This is one of the main result of our study. This rise can be a novel explanation for the rise in market concentration over the years. Indeed, we showed in Section \ref{sec:multifirms} that a rise in $\alpha$ causes in most cases a concentration of the market for a fixed friction intensity. Therefore, Figure~\ref{fig:alpha_year} proposes a new diagnosis on the market concentration: it might - at least partially - be due to a change in the structure of frictions, and not only by a reduction of friction intensity. \\

This rise in the dependency of the meeting rate with respect to the size of the firm could be explained, among other, by the rise of digital platforms. Indeed, in a world where suppliers or jobs are mainly found locally and through word of mouth, agents first meet firms that are geographically close. If agents and firms are distributed the same way across the territory, agents meet all firms at the same rate, regardless of their size. On the contrary, digital platforms allow agent to meet bigger firms first, \emph{because it is what the agents find optimal for themselves}, as mentioned in Section \ref{sec:multifirms}. It is therefore possible that digital platforms have two different effects that can both lead to concentration and do not have the same consequence in terms of efficiency: digital platforms can allow agents to meet more firms - reducing the intensity of friction $r_f$ - and also allow agents to meet \emph{bigger firms} - increasing the value of $\alpha$.  \\

Note that the shape of the preferences' distribution is likely to vary a lot across markets. Such an estimation is beyond the scope of the paper, and is left for future work. We devise, in a companion paper, a methodology to estimate jointly the heterogeneity of preferences and the search frictions, and apply it to the international goods market.

\begin{figure}
    \centering
    \includegraphics[width=0.6\linewidth]{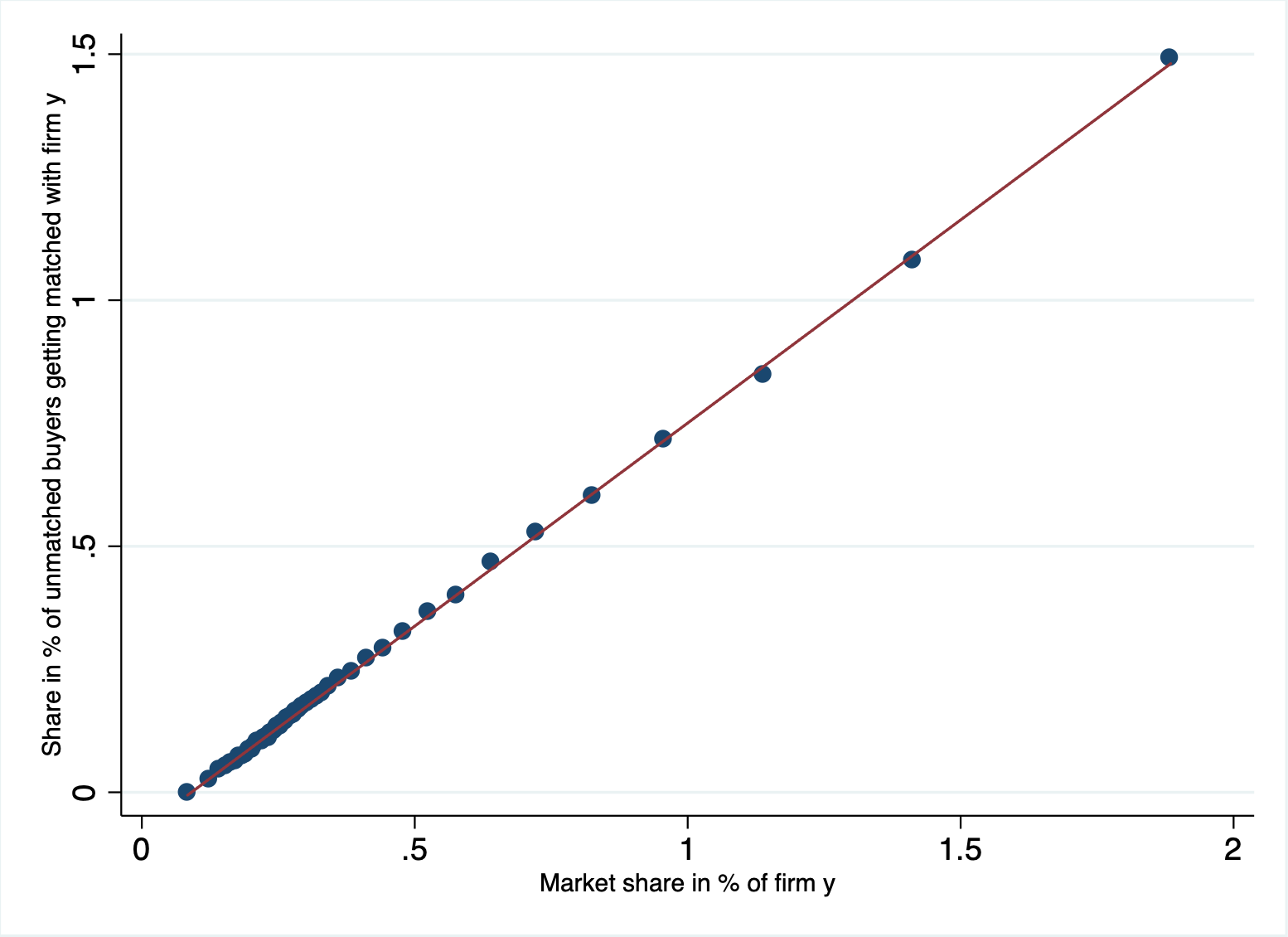}
    \caption{Correlation between $s(u,y)$ and $s(y)$}
    \label{fig:f_totnbnewbuyers_s_y}
    \parbox{1\linewidth}{\footnotesize Note: the market share of each firm is determined for each market, where a market is a EU country x 6-digit product x year cell. We include market fixed effects in the regression.}
\end{figure}

\begin{figure}
    \centering\includegraphics[width=0.7\linewidth]{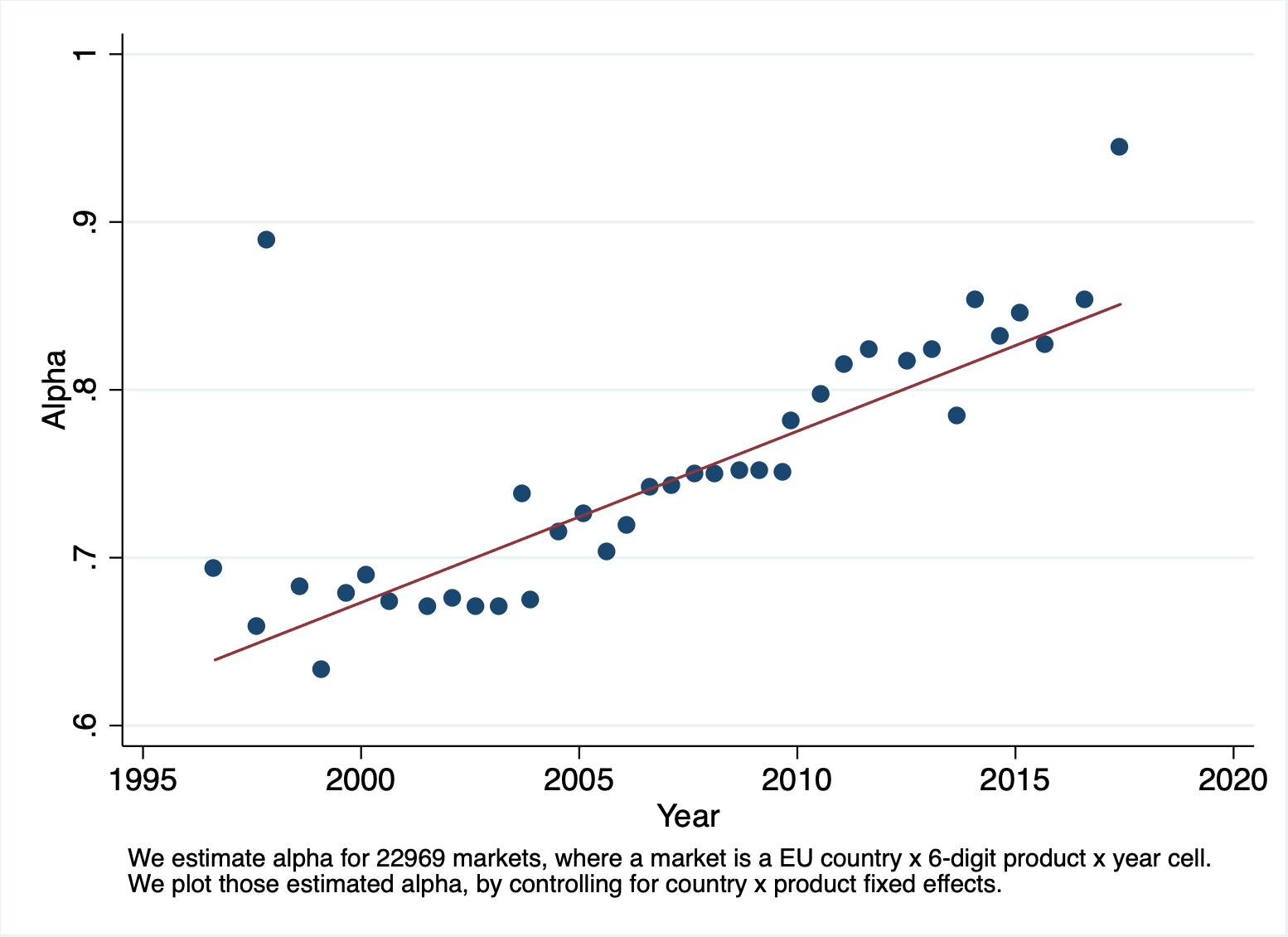}
    \caption{The evolution of the structure of meeting rates over the years}
    \label{fig:alpha_year}
\end{figure}
\vspace{0.5cm}

\subsection{Robustness tests}

\textbf{Declaration rules} Declaration rules for exports change over our period of analysis. In particular, they become more stringent in 2011, when the threshold is set at 460,000 euros of exports per year per firm. We perform here two exercises to test the robustness of our results to these changing thresholds. First, we assess whether the evolution of the structure of meeting rates over the year, exhibited in Figure~\ref{fig:alpha_year}, comes partly from the declaration thresholds. We check this is not the case by estimating for each market and year the $\alpha_t$, conditional on French exporters selling more than 650,000 euros abroad. This thresholds corresponds to the most stringent threshold over the period - set in 2001 - and therefore the $\alpha_t$ are more directly comparable across years. Second, our baseline estimation might be an over-estimation of $\alpha_t$, due to non-zero declaration thresholds.

\section{Conclusion}
\label{sec:conclusion}

In this paper, we study the effect of search frictions on the firm size distribution in presence of heterogeneous preferences. The literature has so far analyzed the role of search frictions in presence of homogenous agents' preferences: it was assumed that all workers for instance prefer the same employer. In this context, the higher the search frictions, the less concentrated the firm size distribution: search frictions advantage the least-productive firms in the market.
With heterogeneous preferences, we assume there is no longer a common ranking of firms by agents, so that the optimal firm size distribution corresponds to the agents' preferences distribution. 
In this context, frictions induce deviations between the optimal and the realized size distribution, but they can now do so in diverse ways: they can have either an homogenizing effect or a concentrating effect. 

In this paper, we provide a theoretical framework, inspired by \cite{lindenlaub2016multidimensional}, to study this question. We develop a partial equilibrium model of the matching process between atomistic agents and firms, which can be applicable to many settings, such as the labor markets or the goods markets.

We show that frictions have an homogenizing effect if meeting rates are constant across firms. This effect is the same as in the homogenous preferences case. 
However, when the meeting rate is proportional to the market share, higher frictions may lead to a more concentrated market, as compared to agents' preferences. In other words, frictions may now advantage the most preferred firm in the market.

The shape of meeting frictions is therefore crucial to understand the impact of frictions on the firm size distribution. We explore empirically which shape of the meeting rate is the most likely to prevail. We study the international goods market - and show that the meeting rate is close to proportional to the firms' market share. We show that with plausible values of search frictions and preferences’ heterogeneity, a rise in frictions leads to a higher concentration of the market, rather than a lower concentration of the market.

Our results have important policy implications.
First, our results have implications for the redistributive effects of some policies, especially policies aiming at reducing frictions on the labor or the good markets. 
Second, the change in the structure of search frictions can be one plausible explanation for the rise in market concentration, occurring in the last decades. It is commonly thought that search frictions have decreased in the last decades, and that could explain partially the rise in concentration. We show that the structure of search frictions might matter as much as the intensity of search frictions. 

\newpage
\bibliographystyle{apalike}
\bibliography{biblio.bib}

\clearpage

\appendix

\clearpage

\section{Annex to Section \ref{sec:twosup}}

Theorem \ref{alwaysloser} characterizes the cases where the effect of friction is ambiguous.

\begin{theorem*} 
When $r_f \geq \frac{\alpha-1/2}{1-\alpha} $, then $s_A \leq p_a$ for all $p_a \geq 0.5 $ (frictions have a homogenizing effect). \\
When $r_f < \frac{\alpha-1/2}{1-\alpha} $, then $s_A > p_a$ for at least some $p_a \geq  0.5 $ (frictions may accentuate existing inequalities).
\end{theorem*}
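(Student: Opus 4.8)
The plan is to convert the implicit cubic defining $s_A$ into an \emph{explicit} formula for $p_a$ as a function of $s_A$, exploiting the fact that Equation \eqref{equationsa} is affine in $p_a$. First I would use $s_B = 1-s_A$ together with $\lambda_i = ((1-\alpha)/2+\alpha s_i)\lambda_{tot}$ to write $\lambda_A = a\,\lambda_{tot}$ and $\lambda_B = (1-a)\lambda_{tot}$ with $a=\tfrac{1-\alpha}{2}+\alpha s_A$, so the two meeting rates always sum to $\lambda_{tot}$. Substituting into \eqref{equationsa} and writing $r_f=\mu/\lambda_{tot}$, the defining relation becomes
\[
 s_A = \frac{a\,r_f}{r_f+1-a} + p_a\,\frac{a(1-a)(2r_f+1)}{(r_f+a)(r_f+1-a)},\qquad a=\frac{1-\alpha}{2}+\alpha s_A,
\]
and I would record the identity $(r_f+a)(r_f+1-a)=r_f(r_f+1)+a(1-a)$ for later use. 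A direct check gives the symmetric fixed point: $p_a=1/2$ forces $a=1/2$ and yields $s_A=1/2$.

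Since the relation is linear in $p_a$, I would solve for it, obtaining the explicit inverse
\[
 p_a(s)=\frac{(r_f+a)\big[(r_f+1-a)s-a\,r_f\big]}{a(1-a)(2r_f+1)},\qquad a=\frac{1-\alpha}{2}+\alpha s .
\]
Because $s_A(p_a)$ is increasing and passes through $(1/2,1/2)$, the claim ``$s_A\le p_a$ for all $p_a\ge 1/2$'' is equivalent to ``$p_a(s)\ge s$ for all $s\in[1/2,1)$'', and the reverse inequality behaves symmetrically. As the denominator $a(1-a)(2r_f+1)$ is positive, the sign of $p_a(s)-s$ is the sign of the numerator $R(s):=(r_f+a)[(r_f+1-a)s-a\,r_f]-s\,a(1-a)(2r_f+1)$, which using the identity above collapses to $R(s)=r_f\big[(r_f+1)s-2s\,a(1-a)-a(r_f+a)\big]$.

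The crux is the factorization of $R$. Expanding in powers of $s$ (with $a=\tfrac{1-\alpha}{2}+\alpha s$) and dividing out the known root $s=1/2$, I expect the clean form
\[
 R(s)=r_f\,\big(s-\tfrac12\big)\,\big[\,2\alpha^2 s(s-1)+\gamma\,\big],\qquad \gamma=r_f(1-\alpha)+\tfrac{(1-\alpha)^2}{2}.
\]
On $[1/2,1]$ the factor $s-1/2$ is nonnegative and $s\mapsto s(s-1)$ is increasing, so $P(s):=2\alpha^2 s(s-1)+\gamma$ attains its minimum at $s=1/2$, where $P(1/2)=\gamma-\alpha^2/2=(1-\alpha)r_f+\tfrac12-\alpha$. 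Hence $P\ge 0$ on the whole interval if and only if $P(1/2)\ge 0$, i.e. $(1-\alpha)r_f\ge \alpha-\tfrac12$, i.e. $r_f\ge\frac{\alpha-1/2}{1-\alpha}$.

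From here both directions follow. When $r_f\ge\frac{\alpha-1/2}{1-\alpha}$, $P\ge 0$ and $s-1/2\ge 0$ give $R(s)\ge 0$, hence $p_a(s)\ge s$ and $s_A\le p_a$ throughout — the homogenizing regime. When $r_f<\frac{\alpha-1/2}{1-\alpha}$, $P(1/2)<0$, so by continuity $R(s)<0$ for $s$ slightly above $1/2$; since $p_a(\cdot)$ is increasing near $1/2$ with $p_a(1/2)=1/2$, the associated preference lies in $(1/2,s)$, giving $s_A>p_a$ for those $p_a\ge 1/2$ — the accentuating regime. The main obstacle I anticipate is purely reaching the factorization: a priori $s_A$ solves a cubic with no usable closed form, so the global ``for all $p_a$'' statement looks like it needs control of a messy implicit function. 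The realization that unlocks everything is that \eqref{equationsa} is affine in $p_a$, so $s$ and $p_a$ can be swapped to obtain an explicit $p_a(s)$, after which the problem reduces to checking the sign of a quadratic at the single point $s=1/2$. A secondary point to treat carefully is the legitimacy of the inversion — that the relevant equilibrium branch is increasing in $p_a$ on $[1/2,1]$ — which I would justify from $\partial F/\partial p_a>0$ and continuity from the symmetric point rather than by solving the cubic directly.
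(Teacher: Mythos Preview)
Your proposal is correct and follows essentially the same approach as the paper: both invert the affine-in-$p_a$ relation \eqref{equationsa} to obtain $p_a(s)$ explicitly, reduce the sign question to a cubic numerator with a forced root at $s=1/2$, and then test the remaining quadratic at $s=1/2$ to obtain the threshold $r_f=(\alpha-1/2)/(1-\alpha)$. The only cosmetic difference is that you factor out $(s-1/2)$ explicitly and study $2\alpha^2 s(s-1)+\gamma$, whereas the paper invokes the antisymmetry of the cubic about $1/2$ (so its derivative is a parabola with vertex at $1/2$) and computes $P'(1/2)$---these are the same computation up to a positive factor.
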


\begin{proof}

From Equation \eqref{equationsa}, we can write, with $\lambda_A(s_A)= \left( (1-\alpha)/2 + \alpha s_A \right) \lambda_{tot}$ and $\lambda_B(s_A)= \left( (1-\alpha)/2 + \alpha(1- s_A) \right) \lambda_{tot}$, 

\begin{align*}
    p_a(s_A) &=\lambda_{tot}  (\mu+\lambda_A(s_A)) (\mu+\lambda_B(s_A))\frac{s_A- \left( \frac{\lambda_A(s_A)}{\mu+\lambda_B(s_A)} \times \frac{\mu}{\lambda_{tot}} \right) }{\lambda_A(s_A) \lambda_B(s_A) (2\mu+\lambda_{tot})} \\ 
 &=\lambda_{tot}  (\mu+\lambda_A(s_A)) \frac{(\mu+\lambda_B(s_A))s_A- \left( \lambda_A(s_A) \times \frac{\mu}{\lambda_{tot}} \right) }{\lambda_A(s_A) \lambda_B(s_A) (2\mu+\lambda_{tot})}.
\end{align*}

We denote $f(s_A)= p_a(s_A)-s_A$. We note that $f(x)=P(x)/Q(x)$ where $P$ is a third degree polynomial with a positive leading coefficient and $Q$ is a second degree polynomial with a negative leading coefficient.
Because $A$ and $B$ are symmetric, we know that $f(1-x)=-f(x)$ and we notice that $Q(1-x)=Q(x)$, so $P(1-x)=-P(x)$. As $P$ is anti-symmetric about $1/2$ and a third degree polynomial, we know its derivative reach a minimum in $1/2$. Therefore, if its derivative in $1/2$ is positive, then its derivative is positive in all points and then $P(x) >0 $ for all $x > 1/2$. However if $P'(1/2)<0$, as $Q'(1/2) =0$, then $f'(1/2)<0$ and $f(x) < 0$ for some $x > 1/2$, meaning that $s_A>p_a$ for some $s_A >1/2$. \\

The only thing left is to compute the value of $P'(1/2)$. Straightforward computations give:
$$ P'(1/2)/\lambda_{tot}^3  =  r_f^2-\alpha r_f^2-\alpha r_f+r_f/2.$$

This in turn, leads to the result of Theorem \ref{alwaysloser}.
\end{proof}

\begin{theorem*} 
When $p_a \geq (\frac{1-\alpha}{2}+\alpha p_a)(\frac{1-\alpha}{2}+ p_a+ 2 \alpha p_a(1-p_a)) $, and $p_a >0.5$ then $s_A \leq p_a$ for all $p_a \geq 0.5 $ (frictions have a homogenizing effect). \\

\end{theorem*}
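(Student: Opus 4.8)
The plan is to recycle the function and factorisation already obtained in the proof of Theorem~\ref{alwaysloser}. Writing $f(s_A)=p_a(s_A)-s_A=P(s_A)/Q(s_A)$, recall that $Q$ is strictly positive on $[0,1]$ (it is a positive multiple of $\lambda_A(s_A)\lambda_B(s_A)(2\mu+\lambda_{tot})$, which stays positive there) and that $P$ is a cubic that is anti-symmetric about $1/2$ with positive leading coefficient $\alpha_3$, so that, with $\alpha_1:=P'(1/2)$,
\[
P(s_A)=\alpha_3\,(s_A-\tfrac12)\Big((s_A-\tfrac12)^2+\tfrac{\alpha_1}{\alpha_3}\Big),\qquad \alpha_3>0 .
\]
Normalising $\lambda_{tot}=1$ (only $r_f=\mu/\lambda_{tot}$ matters), the equilibrium share attached to a preference mass $p_a$ is a root $s_A\in[0,1]$ of $p_a(s)=p_a$, i.e.\ of $f(s)=p_a-s$.

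The first, purely logical, step is a reduction: $s_A\le p_a$ holds as soon as the equation $p_a(s)=p_a$ has no solution in $(p_a,1]$, and for this it suffices that $f(s)\ge 0$ for every $s\in[p_a,1]$, since then $p_a(s)=s+f(s)>p_a$ on $(p_a,1]$. As $Q>0$, this in turn reduces to $P(s)\ge 0$ on $[p_a,1]$. Using the factorisation above with $s\ge p_a>1/2$, the factor $(s-\tfrac12)$ is positive; if $\alpha_1\ge 0$ the bracket is automatically positive, while if $\alpha_1<0$ it is non-negative exactly when $(s-\tfrac12)^2\ge-\alpha_1/\alpha_3$, a condition that, being increasing in $s$ on $[p_a,1]$, propagates from $s=p_a$ to the whole interval. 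Hence it is enough to check the single inequality $P(p_a)\ge 0$, equivalently $f(p_a)\ge 0$, equivalently $p_a(p_a)\ge p_a$.

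It remains to evaluate $f$ at $s_A=p_a$. Substituting $s_A=p_a$, $\lambda_A=\tfrac{1-\alpha}{2}+\alpha p_a$ and $\lambda_B=\tfrac{1-\alpha}{2}+\alpha(1-p_a)$ into the expression for $p_a(\cdot)$ and simplifying with $\lambda_A+\lambda_B=1$ yields
\[
f(p_a)=\frac{\mu\big[\,\mu\,(p_a-\lambda_A)+\big(p_a-\lambda_A^2-2p_a\lambda_A\lambda_B\big)\big]}{\lambda_A\lambda_B(2\mu+1)} .
\]
Two observations then finish the proof. First, $p_a-\lambda_A=(1-\alpha)(p_a-\tfrac12)\ge 0$ for $p_a\ge 1/2$, so the coefficient of $\mu$ inside the bracket is non-negative. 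Second, $\lambda_A^2+2p_a\lambda_A\lambda_B=\lambda_A(\lambda_A+2p_a\lambda_B)=\big(\tfrac{1-\alpha}{2}+\alpha p_a\big)\big(\tfrac{1-\alpha}{2}+p_a+2\alpha p_a(1-p_a)\big)$, so the $\mu$-free term is exactly $p_a-\big(\tfrac{1-\alpha}{2}+\alpha p_a\big)\big(\tfrac{1-\alpha}{2}+p_a+2\alpha p_a(1-p_a)\big)$, which is non-negative by hypothesis. Combining the two, the whole bracket is non-negative for every $\mu\ge 0$, hence $f(p_a)\ge 0$ for all friction intensities, and the reduction above delivers $s_A\le p_a$. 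The main work — and the one place to be careful — is the bookkeeping leading to the factorisation of $f(p_a)$ together with the identity $\lambda_A+2p_a\lambda_B=\tfrac{1-\alpha}{2}+p_a+2\alpha p_a(1-p_a)$; the decisive structural point is that the friction-dependent part carries the manifestly non-negative factor $p_a-\lambda_A=(1-\alpha)(p_a-\tfrac12)$, which is precisely what makes the stated, $r_f$-free condition sufficient uniformly in $\mu$.
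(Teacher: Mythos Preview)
The paper does not supply a proof of this second theorem; it is simply stated immediately after the proof of Theorem~\ref{alwaysloser}. Your argument therefore cannot be compared line by line, but it is correct and in the spirit of the tools already developed.

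A few remarks. Your interpretation of the poorly phrased conclusion (``for all $p_a\ge 0.5$'') as ``for all friction intensities $r_f\ge 0$'' is the right one: the hypothesis pins down $p_a$ and $\alpha$ but not $r_f$, and your computation shows precisely that the sign of $f(p_a)$ is governed by $\mu(p_a-\lambda_A)$ plus the $\mu$-free term, both non-negative, so the conclusion holds uniformly in $\mu$. The reduction step is sound: with $Q>0$ on $[0,1]$ (valid for $\alpha<1$, which is the standing assumption inherited from Theorem~\ref{alwaysloser}), $P(p_a)\ge 0$ together with the anti-symmetric cubic factorisation forces $P(s)\ge 0$ on $[p_a,1]$, so $p_a(s)\ge s>p_a$ there and no equilibrium can lie above $p_a$. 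The algebra is also clean; the identity $\lambda_A+2p_a\lambda_B=\tfrac{1-\alpha}{2}+p_a+2\alpha p_a(1-p_a)$ checks out, and $p_a-\lambda_A=(1-\alpha)(p_a-\tfrac12)\ge 0$ is exactly what makes the condition $r_f$-free.

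The only thing you might add for completeness is a one-line remark that an equilibrium in $[0,1]$ always exists (e.g.\ $p_a(0)<0<p_a<1<p_a(1)$ by the anti-symmetry $p_a(1-s)=1-p_a(s)$ and continuity), so the conclusion is not vacuous.
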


\section{Annex to Section \ref{sec:multifirms}}
\label{annex_sec_multifirms}

Let us denote, as in \cite{lindenlaub2016multidimensional} $F_{\sigma \mid x}$  the conditional sampling cdf of value $\sigma$, given $x$ (with density $f_{\sigma \mid x}$) so that 
$$\bar{F}_{\sigma \mid x}(s)=\mathbb{E}_y\left[\mathbf{1}\left\{\sigma\left(x, y\right)\leq s\right\}\right].$$

We also consider the following functions

$${H}_{\sigma \mid x}(s)=\mathbb{E}_y\left[h(x,y)\mathbf{1}\left\{\sigma\left(x, y\right)<s\right\}\right], $$
with density $\chi_x(s)$. We note that $\chi_x(s)= \sum_{y| \sigma(x,y)=s} h(x,y)\times \left(\frac{\partial\sigma(x,y)}{\partial y}\right)^{-1}$. We also introduce
$${\Lambda}_{\sigma \mid x}(s)=\mathbb{E}_y\left[\lambda_1(y)\mathbf{1}\left\{\sigma\left(x, y\right)<s\right\}\right], $$
with density $\lambda_x(s)=\sum_{y| \sigma(x,y)=s} \lambda_1(y)\times \left(\frac{\partial\sigma(x,y)}{\partial y}\right)^{-1}$.

 Equation \eqref{eq:master} can be written as
 
 \begin{align*} 
h(x,y) \times \left[\mu+\bar{\Lambda}_{\sigma \mid x}(\sigma(x,y)) \right] =  
\lambda_0(y) u(x) + \lambda_1(y) {H}_{\sigma \mid x}(\sigma(x,y))
\end{align*}
Summing over $y$ such that $\sigma(x,y)=s$ leads to:
\begin{equation}\label{eq:formeS}
   \left[\mu+\bar{\Lambda}_{\sigma \mid x}(s) \right] \chi_x(s) =K\lambda_x(s)   u(x)+\lambda_x(s)  \int_0^s \chi_x(s') d s' . 
\end{equation}

We denote $G(s)=\left[\mu+\bar{\Lambda}_{\sigma \mid x}(s) \right]$. Dividing by $ \lambda_x(s)$ and differentiating in $s$ we get:

$$
\frac{G(s)}{\lambda(s)} \chi'_x(s)-\frac{\lambda_x(s)^2 +G(s)\lambda_x'(s)}{\lambda_x(s)^2} \chi_x(s) =   \chi_x(s) .
$$

Therefore
$$
 \frac{\chi'_x( s)}{\chi_x( s)} = -2 \frac{G'(s)}{G(s)}+ \frac{\lambda_x'(s)}{\lambda_x(s)}.
$$

This ODE solves as:
$$
\chi_x( s)= C_x\frac{\lambda_x(s)}{G(s)^2},
$$
where $C_x$  does not depend on $s$. We find $C_x$ by plugging this form in Equation \eqref{eq:formeS}, leading to 
$$ \frac{\lambda_x(s)}{G(s)}=Ku(x)+\int_0^s\chi_x(s') ds'.$$
Taking $s=0$ gives $C_x= Ku(x)(\mu+\lambda_{tot})$.

We can write this as $$\sum_{y, \sigma(x,y)=s} h(x,y)\left(\frac{\partial\sigma(x,y)}{\partial y}\right)^{-1}= \frac{C_x}{G(s)^2}\sum_{y, \sigma(x,y)=s} \lambda_1(y)\left(\frac{\partial\sigma(x,y)}{\partial y}\right)^{-1} $$

As we know from Equation \eqref{eq:master} that among the different $y$ such that $\sigma(x,y)=s$, $h(x,y)/\lambda_1(y)$ is constant, we conclude that 
$$h(x,y)= C_x\frac{\lambda_1(y)}{G(s)^2}.$$ 
In conclusion
$$h(x,y)=\frac{Ku(x)(\mu+\lambda_{tot})\lambda_1(y)}{G(s)^2}.$$

\section{Annex to Section \ref{sec:multifirms} part 2}
\label{annex_sec_multifirms_part2}

First we look at the case where $r_f$ goes to $0$. We use the fact that for $s(y) >0$

$$s(y)= \frac{\lambda(y)}{\lambda_{tot}}\int \frac{r_f (r_f + 1)\ell(x)}{(r_f+\int_{y'}\frac{\lambda(y')}{\lambda_{tot}}\mathbbm{1}[\sigma(x,y')>\sigma(x,y)] dy')^2} dx$$

We know that, in that set up, 
$$\tilde G(x,y)=(r_f+\int_{y'}\frac{\lambda(y')}{\lambda_{tot}}\mathbbm{1}[\sigma(x,y')>\sigma(x,y)] dy')^2 =(r_f+ 2\frac{\lambda(y)}{\lambda_{tot}}|y-x|+o(|y-x|))^2 $$ 
when $x$ goes to $y$ and that  $\tilde G(x,y)$ goes to a non-zero value otherwise. We can integrate between $y-\eta$ and $ y+\eta$ as the rest of the integral goes to zero when $r_f$ goes to $0$. Using a change of variable in the integral, we see that 

$$\int_{y-\eta}^{y+\eta} \frac{r_f (r_f + 1)\ell(x)}{(r_f+ 2\frac{\lambda(y)}{\lambda_{tot}}|y-x|)^2 } dx \rightarrow \frac{\ell(y)\lambda_{tot}}{\lambda(y)},$$

which gives the result $s(y)\rightarrow  \ell(y)$.

For the case where $r_f$ goes to infinity, we work in analogy with Section \ref{sec:twosup}.

Let us denote $$ \mathcal{S}=\left\{[x, x+1/2]\ | \ x \in [0, 1/2]\right\}\bigcup \left\{[0, x]\cup[1/2+x, 1]\ | \ x \in [0, 1/2]\right\}$$ the set of intervals of size $1/2$ on the interval $[0,1]$ regarded as a circle and $\mathcal{S}_y=\{I \in | y \notin I\}$ the set of intervals in $\mathcal{S} $ that does not contain $y$.
Let us take $y^*$ such that $\int_I \ell(x) < 1/2$ for every interval $I$ in $ \mathcal{S}_{y^*}$ . Then let us consider $I_\epsilon= [y^*-\epsilon, y^*+\epsilon] $ and $I_{-\epsilon}=[0,1] \backslash I_\epsilon$ for some small $\epsilon$.

When $r_f$ goes to infinity, the number of agents that have meet $n$ suppliers if of order $\mathcal{O}(r_f^{-n})$. We keep only the first two orders for matched agents: we assume that all matched agents have encountered either $1$ (first order) or $2$ (second order) suppliers. 

Then, we group together all firms in $I_\epsilon$ and see them as a ``super-firm'' denoted $A$ and all firms in $I_{-\epsilon}$ as a ``super-firm'' denoted $B$. This analogy only works in regime where $r_f$ goes to infinity because we exclude agents that would go from $B$ to $A$ and back to $B$ as this is a third order fraction of all agents.

Following notations from Section \ref{sec:twosup}, we denote $s_A= \int_{I_\epsilon} s(y) dy$ and $\lambda_A$, $\lambda_B$ accordingly. We now from Section \ref{sec:twosup} that $s_A$ either goes to one or to zero depending on $p_a$ the proportion of agents that prefer firm $A$. Here the first order (in $\epsilon$) approximation of $p_a$ is given by: 

$$\int \int_{y' \in I_{-\epsilon}} \mathbbm{1}[\sigma(x,y')<\sigma(x,y^*)] \frac{\lambda(y')\ell(x)}{\lambda_B} dy' dx,$$

approximation made assuming all $\sigma(x,y)\simeq \sigma(x,y^*) $ for $y \in I_\epsilon$. We give a lower bound on that quantity by giving an upper bound on $p_b=1-p_a$:
\begin{align*}
    p_b &\leq \sup_{ \lambda(y')| \int \lambda(y')  = \lambda_B}\int_{y' \in I_{-\epsilon}}\int  \mathbbm{1}[\sigma(x,y')>\sigma(x,y^*)] \frac{\lambda(y')\ell(x)}{\lambda_B} dx dy' \\
    & = \sup_{ y' \in   I_\epsilon} \int  \mathbbm{1}[\sigma(x,y')>\sigma(x,y^*)] \ell(x) dx
\end{align*} 

Taking $\sigma(x,y)= f(d(x,y))$, the set  $\{x |\sigma(x,y')>\sigma(x,y^*) \}$ is a set of $S_{y^*}$. By hypothesis, we have 
$$ \sup_{ y' \in   I_\epsilon} \int  \mathbbm{1}[\sigma(x,y')>\sigma(x,y^*)] \ell(x) dx  < 1/2, $$
so $p_a >1/2$ and, when $r_f$ goes to infinity, $s_B $ goes to 0.

\section{Annex to Section \ref{subsec:numerical}}\label{annex:numerical}

 We now that $s$ is the solution to the following equation:

$$s(y)=(\alpha s(y)+(1-\alpha) )\int \frac{r_f (r_f + 1)\ell(x)}{(r_f+\int_{y'}s(y')\mathbbm{1}[\sigma(x,y')>\sigma(x,y)] dy')^2} dx.$$

We approximate it using the  fixed-point iteration method. We compute $s_{t+1}=F(s_t)$ for the following functional $F$:

$$s_{t+1}(y)=(\alpha s_t(y)+(1-\alpha) )\int \frac{r_f (r_f + 1)\ell(x)}{(r_f+\int_{y'}s_t(y')\mathbbm{1}[\sigma(x,y')>\sigma(x,y)] dy')^2} dx.$$

This sequence can only converge to a fixed-point of $F$, thus it can only converge to $s$.

\section{Annex to Section~\ref{sec:heterogmeetingrates}}

In Section~\ref{sec:heterogmeetingrates}, we provide novel empirical evidence on the shape of meeting frictions for the international goods market. We complement this estimation with a similar estimation for the French labor market. 

\textbf{Empirical methodology} We use matched a French employer-employee dataset - \textit{DADS Postes} - which describes every firm for which every French employee worked during the calendar year, the number of hours worked, and the wage and occupation. Workers have individual identifiers, which allows us to track them across firms over time, and crucially identify flows into each firm of previously unmatched workers. We perform the analysis for the year 2015. We select workers who are working in a given year but were not the year before. 
For each firm, we compute the share of those previously-unmatched workers getting matched to the firm, the empirical counterpart to $s(u,y)$, and regress it on the share of workers that this firm has in 2014, the empirical counterpart to $s(y)$. Our final dataset includes 21.5 million workers and 1,340,482 French firms. \\

\textbf{Results} First, the relationship seems, as for the international goods market, quite linear. Second, the slope of the line gives us: $\hat{\alpha}=0.957$. The estimated $\alpha$ is therefore quite close to what we estimated for the international goods market.  \\ 


\section{Plausible scenarii for the effect of frictions}

 We perform simulations to study the likely impact of search frictions with our estimated values of $\alpha$, around 0.75. The effect of frictions on the firm size distribution depends on $\alpha$, but also on the range of likely values for frictions, and on the heterogeneity of preferences. \par 
Search frictions in the labor market are usually found to be between 0.1 and 0.2 \citep{postel2002equilibrium}.\footnote{\cite{postel2002equilibrium} estimates that $\kappa_1$, which is the meeting rate $\lambda_1$ divided by the exogenous destruction rate $\delta+\mu$ is equal to: 5.84 for unskilled manual workers, 7.07 for skilled manual workers, 7.61 for executives, managers and engineers, 8.87 for technical supervisors and technicians. This yields, in our notation, an approximate of the intensity of search frictions $\frac{\mu}{\lambda_{tot}}$ between 0.11 and 0.17.}
Based on such values, we compute the equilibrium firm size distribution for frictions increasing from 0 to 0.3. One aspect we do not have much information on is how heterogeneous preferences are. 
In Figure~\ref{fig:syVSyPROP3c}, we show that even with a mildly-heterogeneous preferences' distribution, higher frictions concentrate the distribution around the median preference - 0.5 in our example. Figure~\ref{fig:syVSyPROP3d} displays a similar pattern with stronger heterogeneous preferences. \par 

\begin{figure}
\centering
\begin{subfigure}[t]{0.45\textwidth}
        \includegraphics[width=\linewidth]{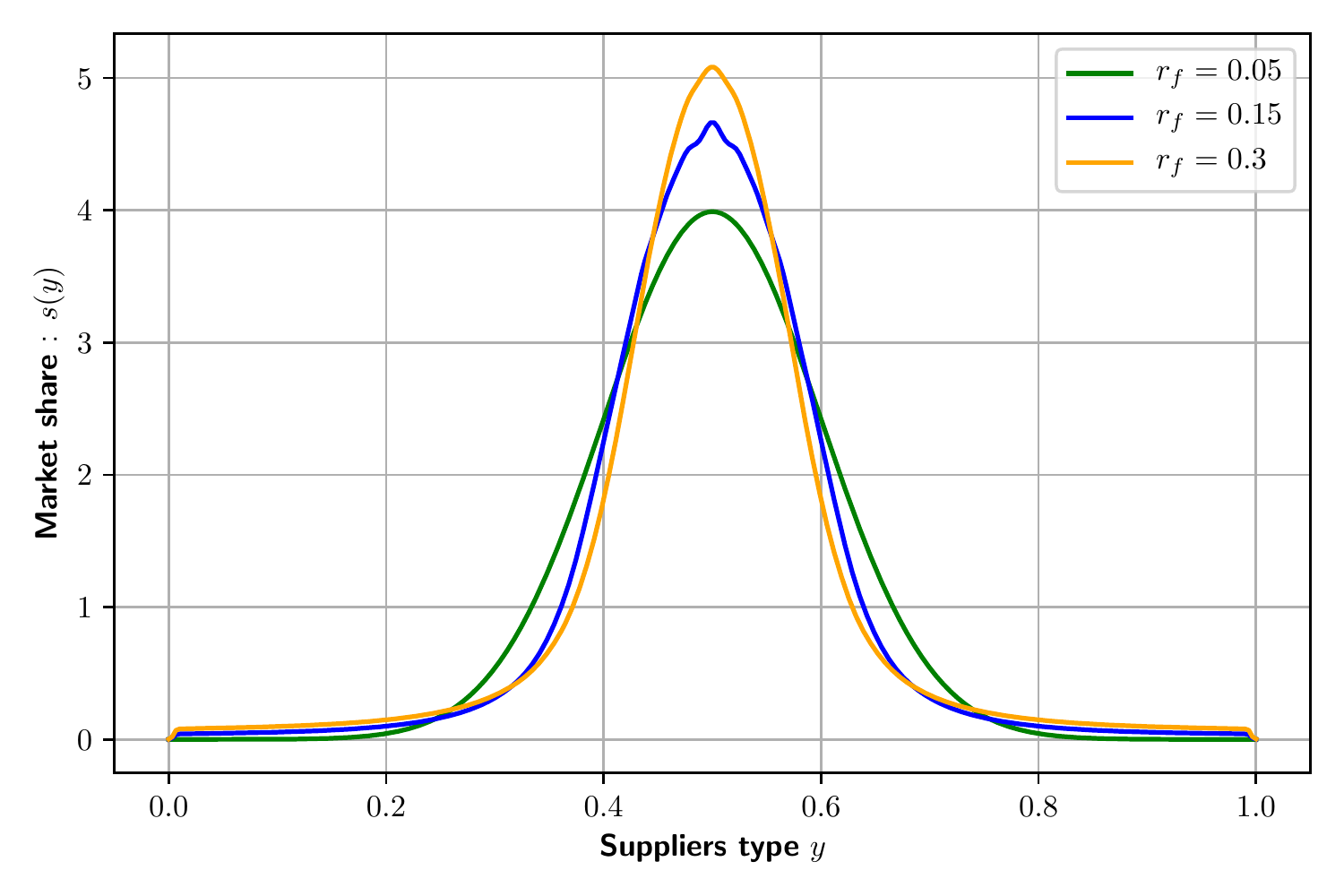}
        \caption{Mildly-heterogenous preferences, with normal-shaped preferences}
        \label{fig:syVSyPROP3c}
\end{subfigure}
\hfill
\begin{subfigure}[t]{0.45\textwidth}
        \includegraphics[width=\linewidth]{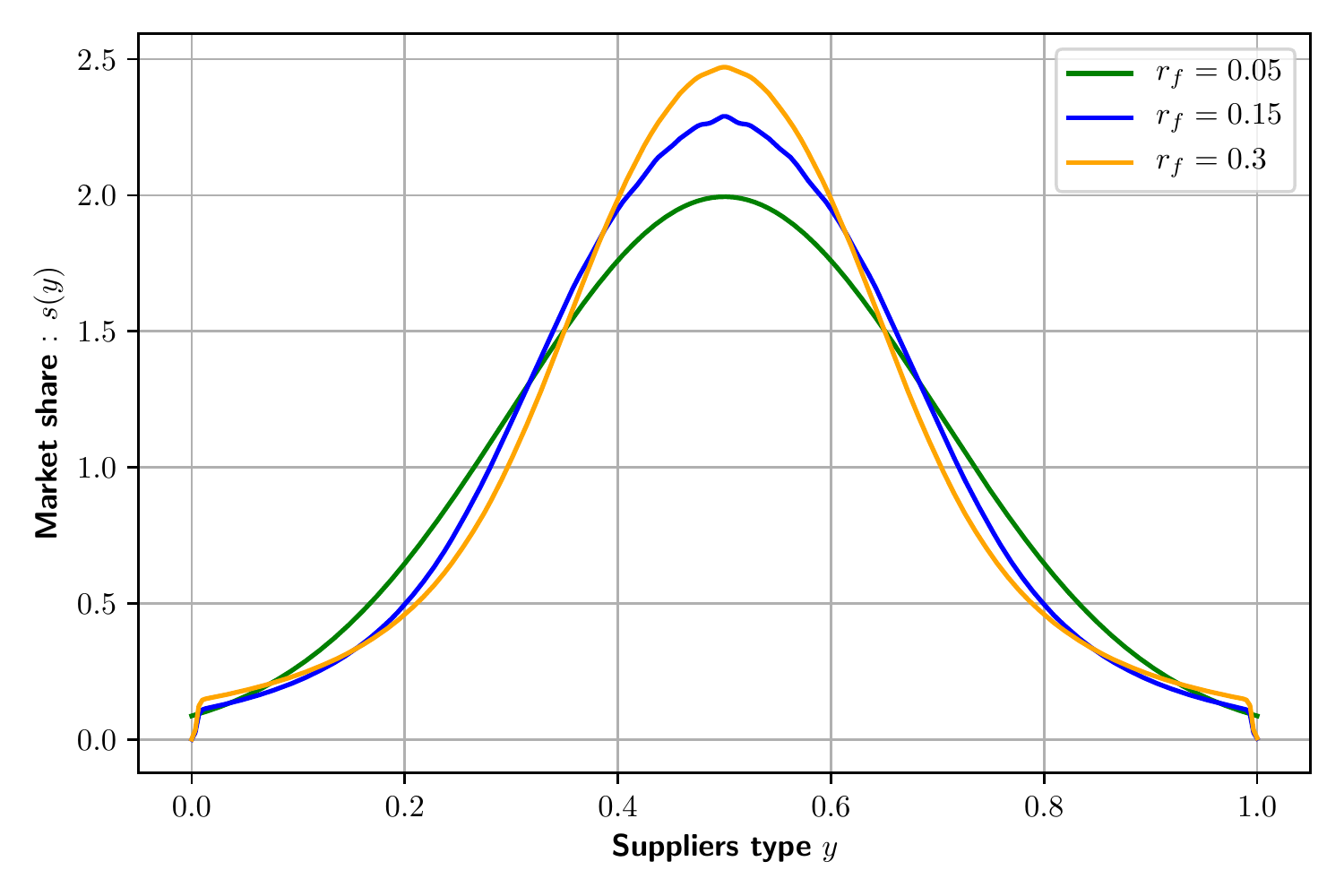}
        \caption{Strongly-heterogenous preferences, with normal-shaped preferences}
        \label{fig:syVSyPROP3d}
\end{subfigure}
\caption{Market share depending on firms type, for different intensity $r_f$ of search frictions, and with $\alpha = 0.75$}
\label{fig:finalsimulation}
\end{figure}

\end{document}